\newcommand{\mysetminus}{\mathbin{\vcenter{\hbox{$\scriptstyle\fgebackslash$}}}}
\newtheorem{proposition}{Proposition}
\newtheorem{definition}{Definition}
\newtheorem{corollary}{Corollary}
\newcommand{\blind}{0}
\begin{document}

\def\spacingset#1{\renewcommand{\baselinestretch}%
{#1}\small\normalsize} \spacingset{1}


\if0\blind
{
  \title{\bf Dependent Modeling of Temporal Sequences of Random Partitions}
  \author{Garritt L. Page\thanks{The first author gratefully acknowledges support from the Basque
Government through the BERC 2018-2021 program, by the Spanish Ministry of
Science, Innovation and Universities through BCAM Severo Ochoa
accreditation SEV-2017-0718.  The second author is supported by the grant FONDECYT 1180034 and by
ANID - Millennium Science Initiative Program - NCN17\_059.}\hspace{.2cm}\\
    Department of Statistics, Brigham Young University\\
    BCAM - Basque Center of Applied Mathematics, Bilbao, Spain, \vspace{0.1cm}\\
    Fernando A. Quintana \\
    Departamento de Estad\'istica, Pontificia Universidad Cat\'olica de Chile\\
    Millennium Nucleus Center for the Discovery of Structures in Complex Data, \vspace{0.1cm}\\
     David B. Dahl \\
    Department of Statistics, Brigham Young University}
  \maketitle
} \fi

\if1\blind
{
  \bigskip
  \bigskip
  \bigskip
  \begin{center}
    {\LARGE\bf Dependent Random Partition Models}
\end{center}
  \medskip
} \fi

\vspace{-3ex}
\begin{abstract}
We consider modeling a dependent sequence of random partitions.  It is
well-known in Bayesian nonparametrics that a random measure of discrete
type  induces a distribution over random partitions. The community has
therefore assumed that the best approach to obtain a dependent sequence of
random partitions is through modeling dependent random measures. We argue
that this approach is problematic and show that the random partition model
induced by dependent Bayesian nonparametric priors exhibits
counter-intuitive dependence among partitions even though the dependence
for the sequence of random probability measures is intuitive. Because of
this, we suggest directly modeling the sequence of random partitions when
clustering is of principal interest. To this end, we develop a class of
dependent random partition models that explicitly models dependence in a
sequence of partitions. We derive conditional and marginal properties of
the  joint partition model and devise computational strategies when
employing the method in Bayesian modeling. In the case of temporal
dependence, we demonstrate through simulation how the methodology produces
partitions that evolve gently and naturally over time. We further
illustrate the utility of the method by applying it to an environmental
data set that exhibits spatio-temporal dependence.
\end{abstract}

\noindent%
{\it Keywords:}  correlated partitions; 
hierarchical Bayes modeling; Bayesian nonparametrics; spatio-temporal clustering. 

\spacingset{1.5} 
\section{Introduction}\label{sect:intro}

We introduce a method to directly model dependence in a sequence of random
partitions.  Our approach is motivated by the practical problem of
defining a prior distribution to model a sequence of random partitions
that potentially exhibits substantial concordance over time (e.g., gently
evolving clusterings over time). Traditionally, dependencies in random
partitions (i.e., the clustering of units) have been obtained as a
by-product of dependent random measures in Bayesian nonparametric (BNP)
methods. We argue, however, that when a sequence of partitions \emph{is}
the inferential object of interest, then the sequence of partitions should
be modeled \emph{directly} rather than relying  on \emph{induced} random
partition models, such as those implied by temporally dependent BNP
models. But first, we review the literature on dependent BNP methods.

\spacingset{1}
\begin{figure}[t]
\begin{center}
\includegraphics[scale=0.55]{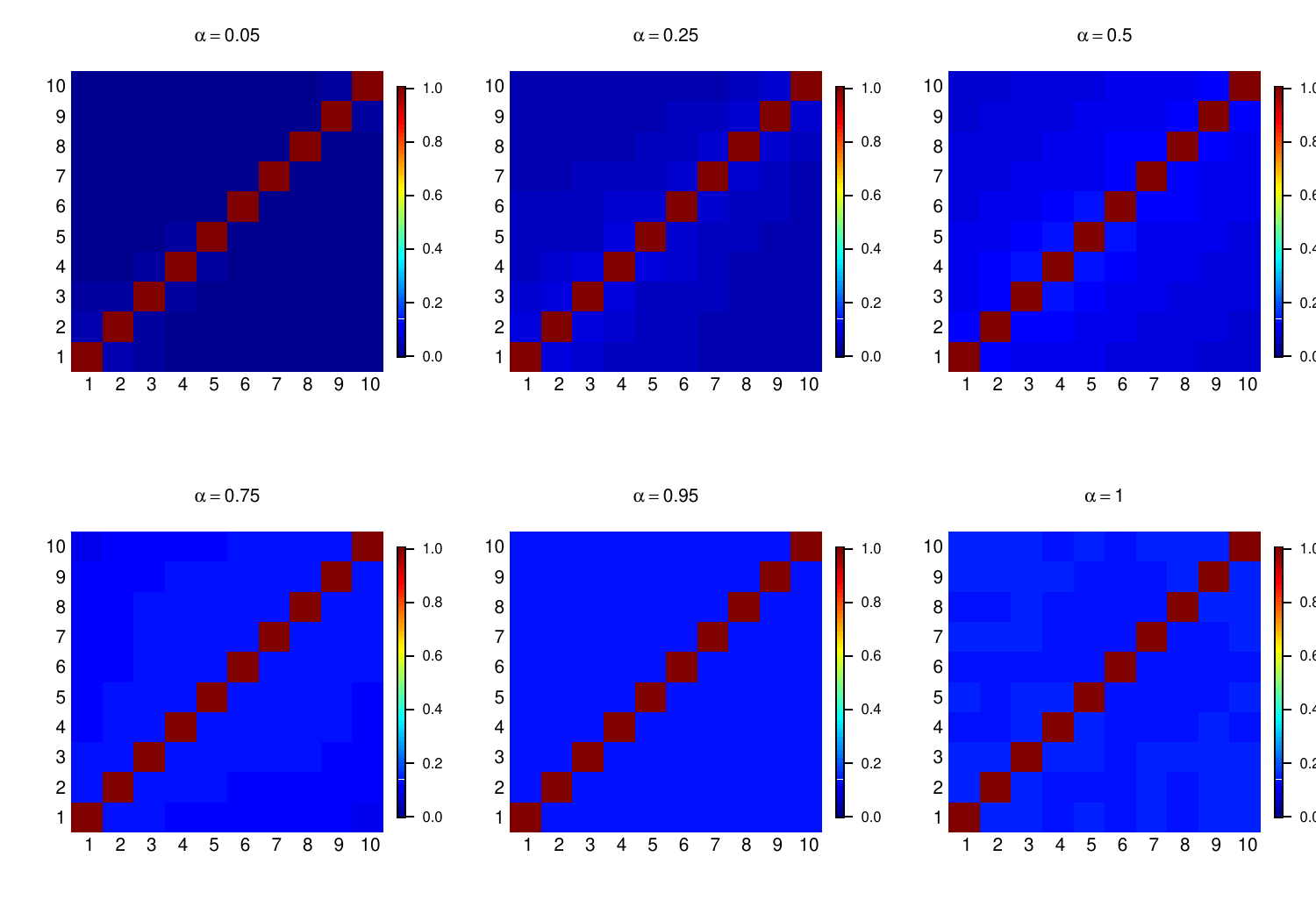}
\caption{For various values of the temporal dependence parameter $\alpha$, these
plots show the lagged ARI values using the method of \citet{caron:2017}
based on concentration parameter $M=0.5$, discount parameter set to zero,
and 10,000 Monte Carlo samples.} \label{caron2}
\end{center}
\end{figure}
\spacingset{1.5}

\spacingset{1}
\begin{figure}[ht]
\begin{center}
\hspace*{-1 cm}\includegraphics[scale=0.55]{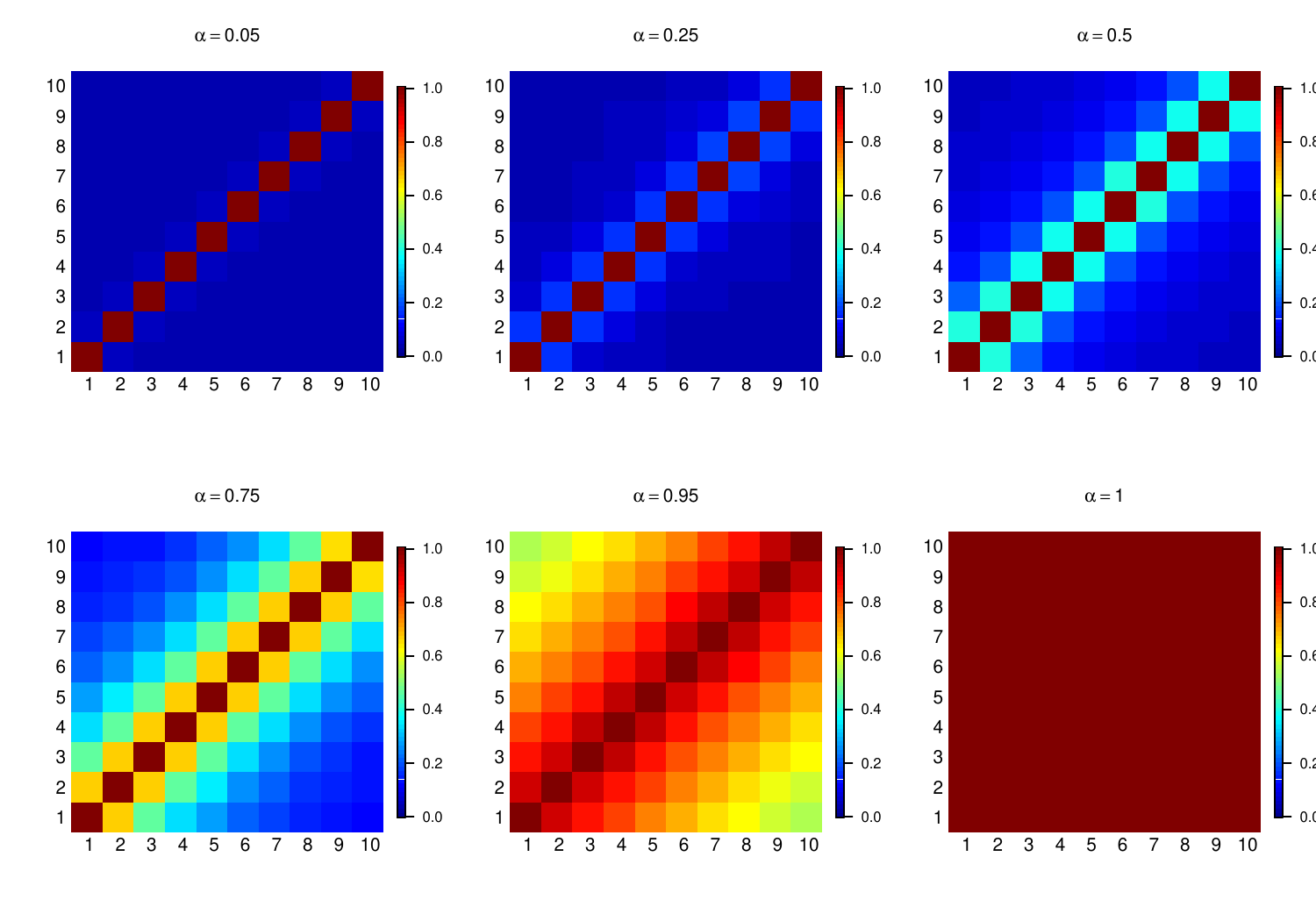}
\caption{Lagged ARI values based on 10,000 Monte Carlo samples using the method
developed in this paper.  Partitions show natural and intuitive temporal
dependence as lagged time increases and as the temporal dependence
parameter $\alpha$ increases.} \label{ours2}
\end{center}
\end{figure}
\spacingset{1.5}

A non-exhaustive list of Bayesian nonparametric methods that temporally
correlate a sequence of random probability measures include
\citet{nieto-barajas:2012}, \citet{isadora}, \citet{gutierrez:2016},
\citet{jo2017},  \cite{kalli&griffin}, \cite{DeYoreo&Kottas:18}, and
\cite{deiorio2019bayesian}. A common aspect of all these methods is that
temporal dependence is accommodated in the sequence of random measures by
way of the atoms or weights of the stick-breaking representation
\citep{Sethuraman:1994}. An alternative approach to producing a sequence
of temporally correlated random probability measures can be found in
\citet{Caron:2007} and \cite{caron:2017}. Their construction is based on a
generalised P\'olya urn scheme where dependencies between distributions
that evolve over time are induced by urn-like operations on counts and the
parameters to which they are associated. A key insight associated with all
mentioned approaches, however, is that the induced random partitions only
exhibit weak dependence even when a sequence of random probability
measures is highly correlated.  To illustrate this point, we conducted a
small Monte Carlo simulation where an induced sequence of partitions was
generated with 10 time points and 20 units using the method of
\citet{caron:2017}. To measure similarity of partitions at different time
points, we use a time-lagged adjusted Rand index (ARI)
(\citealt{hubert&arabie:1985}). Figure \ref{caron2} shows these values
averaged over 10,000 Monte Carlo samples. Notice that as the temporal
dependence parameter ($\alpha$) increases, the partitions from time period
$t$ to $t+1$ only become slightly more similar, such that the dependence
between partitions is, at best, only weak.  Further, the dependence is not
temporally intuitive as it does not decay as a function of lagged time. In
contrast, compare the dependence structure in Figure \ref{caron2} to that
in Figure \ref{ours2}, which contains average lagged ARI values between
time lagged partitions generated using the method developed in this paper.
Notice that unlike the induced random partitions generated using the
method in \citet{caron:2017}, the sequence of partitions generated using
our approach displays intuitive temporal dependence.  That is, as the time
lag increases, similarity between partitions decreases and as the temporal
dependence parameter $\alpha$ increases, the partitions become dissimilar
over time at a slower rate.

The counter-intuitive behavior displayed in Figure \ref{caron2} is not
unique to the approach of \cite{caron:2017}. As noted by
\cite{wade&walker&petrone:2014}, the same type of behavior is present when
using a linear dependent Dirichlet process mixture model.  In fact,
all BNP methods that model a sequence of discrete random probability
measures will induce a random partition model with similarly weak
correlation behavior. This behavior is analogous to trying to induce
dependence among random variables from distributions with correlated
parameters.  There is no guarantee that correlated parameters would
produce strong correlations among the random variables themselves.

Our approach is to consider the sequence of partitions as the parameter of
principal interest and develop a method that models it directly. This will
provide more control over how ``smoothly'' partitions evolve over time.
Perhaps the work closest to ours, in the sense of explicitly modeling a
sequence of partitions, can be found in \cite{carlosetal}. Their modeling
approach for a temporally-referenced sequence of partitions  can be
applied to only two time points and differs from ours in that they do not
focus on smooth evolution of partitions over time.

The remainder of the article is organized as follows. In
Section~\ref{model} we present the proposed approach for a sequence of
dependent random partitions, discuss its main properties, and suitable
computational strategies for inference based on posterior simulation.
Section \ref{sec:simulations.studies} contains the results from three
simulation studies that further explore aspects of the model.
Section~\ref{applications} describes an environmental data application and
some concluding remarks are provided in Section~\ref{conclusions}.  An
accompanying Supplementary Materials file collects the proofs of results
stated below, provides details on posterior simulation algorithms, and
contains further simulation and data analysis results.

\section{Joint Model for a Sequence of Partitions}\label{model}

Before detailing our method, we introduce some general notation. Let $i=1,
\ldots, m$ denote the $m$ experimental units at time $t$ for $t=1,\ldots,
T$. Let $\rho_t = \{S_{1t}, \ldots, S_{k_tt}\}$ denote a partition of the
$m$ experimental units at time $t = 1, \ldots, T$ into $k_t$ clusters. An
alternative notation is based on $m$ cluster labels at time $t$ denoted by
$\bm{c}_t  = \{c_{1t}, \ldots, c_{mt}\}$ where $c_{it} = j$
implies that $i \in S_{jt}$. 
Finally, any quantity with a ``$\star$'' superscript will be
cluster-specific. For example, we will use $\mu^{\star}_{jt}$ to denote
the mean of cluster $j$ at time $t$  so that $\mu_{it} =
\mu^{\star}_{c_{it}t}$. 

\subsection{Temporal Modeling for Sequences of Partitions} \label{dependentPartitions}

Introducing temporal dependence in a collection of  partitions  requires
formulating a joint probability model for $(\rho_1, \ldots, \rho_T)$.
Generically, we will denote this joint model with $\text{Pr}(\rho_t,
\ldots, \rho_T)$.   Temporal dependence among the $\rho_t$'s implies that
the cluster configuration in $\rho_t$ could be impacted by that found in  $\rho_{t-1}, \rho_{t-2}, \ldots,
\rho_1$.  However, we
assume that  the probability model for the sequence of partitions has a
first-order Markovian structure.  That is, the conditional distribution of
$\rho_t$ given $\rho_{t-1}, \rho_{t-2}, \ldots, \rho_1$ only depends on
$\rho_{t-1}$.  Thus, we construct $\text{Pr}(\rho_t, \ldots, \rho_T)$ as
\begin{align} \label{joint.model}
\text{Pr}(\rho_1, \ldots, \rho_T) = \text{Pr}(\rho_T \mid  \rho_{T-1}) \text{Pr}(\rho_{T-1} \mid  \rho_{T-2})\cdots \text{Pr}(\rho_2 \mid  \rho_1)\text{Pr}(\rho_1).
\end{align}
Here $\text{Pr}(\rho_1)$ is an exchangeable partition probability function
(EPPF) that describes how the $m$ experimental units at time period 1 are
grouped into $k_1$ distinct groups with frequencies  $n_{11}, \ldots,
n_{1k_1}$.  One characteristic of an EPPF that will prove useful in what
follows is sample size consistency, or what
\cite{deblasi&favaro&lijoi&mena&prunster&ruggiero:2015} refer to as the
{\em addition rule}. This property dictates that marginalizing the last of
$m + 1$ elements leads to the same model as if we only had $m$ elements. A
commonly encountered EPPF is that induced by a Dirichlet process (DP).
This particular EPPF is sometimes referred to as a Chinese restaurant
process (CRP) which can be seen as  a special case from the family
of product partition models (PPM). For more details, see
\cite{deblasi&favaro&lijoi&mena&prunster&ruggiero:2015}. Because we employ
the EPPF of the CRP in what follows, we provide its form here
\begin{align}\label{crp}
\text{Pr}(\rho \mid  M) = \frac{M^k}{\prod_{i=1}^{n}(M+i-1)} \prod_{i=1}^k (|S_i| - 1)!,
\end{align}
where $k$ is the number of clusters in $\rho$ and $M$ is a concentration
parameter controlling the number of clusters.  We will denote this random
partition distribution as $CRP(M)$.

Although conceptually straightforward, \eqref{joint.model} is silent
regarding how $\rho_{t-1}$ influences the form of $\rho_{t}$.  To make
this explicit, we introduce an auxiliary variable that guides the
similarity between $\rho_t$ and $\rho_{t-1}$. Note that if two partitions
are highly dependent, then the cluster configurations between them will
change very little and as a result only a few of the $m$ experimental
units will change cluster assignment. Conversely, two partitions that
exhibit low dependence will likely be comprised of very different cluster
configurations. The auxiliary variable we introduce identifies which of
the experimental units at time $t-1$ will be considered for possible
cluster reallocation at time $t$. Specifically, let $\gamma_{i t}$ denote
the following
\begin{align}
\gamma_{i t} =
\left\{
\begin{array}{c l}
 1 & \mbox{if unit $i$ is {\it not} reallocated when moving from time  $t-1$ to $t$}   \\
 0 &  \mbox{otherwise},
\end{array}
\right.
\end{align}
for $i=1, \ldots,m$.  Notice that when $\gamma_{it} = 0$, item $i$ is
subject to reallocation at time $t$, but still may, by random assignment,
end up in the same cluster at time $t$ as it was at time $t-1$. By
construction, we set $\gamma_{i1} = 0$ for all $i$, i.e., all experimental
units are allocated to clusters during the first time period. We then
assume that $\gamma_{i t} \stackrel{ind}{\sim} Ber(\alpha_t)$. Note that
each of the $\alpha_t \in [0,1]$ acts as a temporal dependence parameter.
Specifically, we will interpret $\alpha_t =1$ as implying that $\rho_t =
\rho_{t-1}$ with probability 1. Conversely, when $\alpha_t=0$, then
$\rho_t$ is independent of $\rho_{t-1}$. Further, when $\alpha_t$ is
constant for all $t$, the degree of dependence among partitions is
constant over time, whereas general values for $\alpha_t$ provide for
varying degrees of dependence and more flexible partition patterns over
time. For notational convenience, we introduce $\bm{\gamma}_{t} =
(\gamma_{1 t}, \gamma_{2 t}, \ldots, \gamma_{m t})$ which is an $m$-tuple
comprised of zeros and ones. The augmented joint model changes
\eqref{joint.model} to
\begin{multline} \label{joint.joint.model}
\text{Pr}(\bm{\gamma}_1,\rho_1, \ldots, \bm{\gamma}_T, \rho_T) =  \text{Pr}(\rho_T \mid  \bm{\gamma}_T, \rho_{T-1}) \text{Pr}(\bm{\gamma}_T) \times \\
\text{Pr}(\rho_{T-1} \mid \bm{\gamma}_{T-1},
\rho_{T-2})\text{Pr}(\bm{\gamma}_{T-1})\cdots \text{Pr}(\rho_2 \mid
\bm{\gamma}_2, \rho_1)\text{Pr}(\bm{\gamma}_2) \text{Pr}(\rho_1).
\end{multline}
We describe $\text{Pr}(\rho_t \mid  \bm{\gamma}_t, \rho_{t-1})$ shortly,
but first provide a definition.

\begin{definition}
We say that partitions $\rho_{t-1}$ and $\rho_t$ are compatible with
respect to $\bm{\gamma}_t$, if $\rho_t$ may be obtained from $\rho_{t-1}$
by reallocating items as indicated by $\bm{\gamma}_t$, i.e., those items
$i$ such that $\gamma_{ti} = 0$ for $i=1,\ldots,m$.  Note that the compatibility relation is an equivalence relation.

\end{definition}

There is a simple way to check if $\rho_{t-1}$ is compatible with $\rho_t$
with respect to $\bm{\gamma}_t$. Let $\mathfrak{R}_{t} = \{i : \gamma_{it}
= 1\}$ be the collection of units that  remain fixed when moving from time
$t-1$ to time $t$, and $\mathfrak{R}^C_{t}= \{i : \gamma_{it} = 0\}$ is
the collection of units that do not. Next denote with
$\rho^{\mathfrak{R}_t}_t$ the ``reduced'' partition at time $t$  that
remains after removing all items in $\mathfrak{R}^C_{t}$ from the subsets
of $\rho_t$.
Similarly, let $\rho^{\mathfrak{R}_t}_{t-1}$ be the reduced partition at
time $t-1$ based on $\bm{\gamma}_t$. Then $\rho_{t-1}$ and $\rho_t$ are
compatible with respect to $\bm{\gamma}_t$ if and only if
$\rho^{\mathfrak{R}_t}_{t-1} = \rho^{\mathfrak{R}_t}_t$.

Now, to further characterize $\text{Pr}(\rho_t \mid  \bm{\gamma}_t,
\rho_{t-1})$, let $P$ denote the set of all partitions of $m$ units and
let $P_{C_t} = \{\rho_t \in P : \rho^{\mathfrak{R}_t}_{t-1} =
\rho^{\mathfrak{R}_t}_t \}$ be the collection of partitions at time $t$
that are compatible with $\rho_{t-1}$ based on $\bm{\gamma}_t$.  Then, by
construction, $\text{Pr}(\rho_t \mid  \bm{\gamma}_t, \rho_{t-1})$ is a
random partition distribution whose support is $P_{C_t}$  so that
\begin{align*}
\text{Pr}(\rho_t = \lambda \mid  \bm{\gamma}_t, \rho_{t-1}) = \displaystyle \frac{\text{Pr}(\rho_t = \lambda) \text{I}[\lambda \in P_{C_t}]}{\sum_{\lambda^\prime} \text{Pr}(\rho_t = \lambda^\prime)\text{I}[\lambda^\prime \in P_{C_t}]},
\end{align*}
where  $\text{Pr}(\rho_t = \lambda)$ is
the EPPF at the first time point evaluated at $\lambda$. Here, and in what
follows, $\text{I}[A]$ denotes an indicator function with $\text{I}[A]=1$
if statement $A$ is true, and $0$ otherwise.

It would be appealing if marginally each of the $\rho_t$ follow the
assumed EPPF for $\rho_1$, so that the joint probability model for
partitions would be stationary. The following proposition establishes this
result, which is a consequence of the fact that conditioning on
$\bm{\gamma}_t$ provides a ``reduced'' EPPF.
\begin{proposition}  \label{proposition1}
Let $\rho_1 \sim EPPF$  and $\bm{\gamma}_1 = \bm{0}$.  If a joint model for
$\rho_1 \ldots, \rho_T$ is constructed as described above by introducing
$\bm{\gamma}_t$ for $t = 2, \ldots, T$, then we have that marginally
$\rho_{1},\ldots,\rho_{T}$ are identically distributed with law coming
from the EPPF used to model $\rho_1$.  Specifically, letting  $\rho_{-t} =
(\rho_1, \ldots, \rho_{t-1}, \rho_{t+1}, \ldots, \rho_T)$ and $\bm{\gamma}
= (\gamma_1, \ldots, \gamma_T)$, we have that for all $\lambda \in P$,

\begin{align*}
\textup{Pr}(\rho_t = \lambda) = \sum_{\rho_{-t} \in P^{\otimes m}} \sum_{\bm{\gamma} \in \Gamma^{\otimes m}} \textup{Pr}(\bm{\gamma}_1, \rho_1, \ldots, \rho_t=\lambda, \ldots, \bm{\gamma}_T, \rho_T) = \textup{Pr}(\rho_1=\lambda),
\end{align*}
where $P^{\otimes m} = P \times P \times \ldots \times P$,  $P$ a
collection of all partitions of $m$ units, $\Gamma^{\otimes m} = \Gamma
\times \Gamma \times \ldots \times \Gamma$, and $\Gamma$ a collection of
all possible binary vectors of size $m$.
\end{proposition}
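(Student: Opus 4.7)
The plan is to establish the result by induction on $t$, with the base case $t=1$ being immediate. For the inductive step, suppose $\rho_{t-1}$ is distributed according to the parent EPPF; I will show $\rho_t$ has the same law. Throughout I write $p_m(\rho)$ for the EPPF probability of a partition $\rho$ of $m$ units and $p_{|R|}$ for the corresponding EPPF on $|R|$ units (the paper's $\text{Pr}(\rho)$ notation).

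The central tool is the addition rule (sample size consistency). Combined with exchangeability, this implies that for any subset $R \subseteq \{1,\ldots,m\}$ and any partition $\pi$ of $R$,
\[
\sum_{\rho : \rho|_R = \pi} p_m(\rho) = p_{|R|}(\pi),
\]
where the sum ranges over partitions of $\{1,\ldots,m\}$ whose restriction to $R$ equals $\pi$. Next, I would unpack the compatibility relation: given $\bm{\gamma}_t$, set $R_t = \{i : \gamma_{it}=1\}$ for the units that are \emph{not} reallocated; then $\rho_t \asymp \rho_{t-1}$ amounts to the restriction identity $\rho_t|_{R_t} = \rho_{t-1}|_{R_t}$, since unreallocated items must retain their time-$(t-1)$ co-clustering pattern.

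With this in hand, the denominator in $\text{Pr}(\rho_t=\lambda\mid\bm{\gamma}_t,\rho_{t-1})$ collapses by the addition rule to $p_{|R_t|}(\rho_{t-1}|_{R_t})$. Writing the marginal via the law of total probability, using the inductive hypothesis for $\rho_{t-1}$, and using independence of $\bm{\gamma}_t$ from $\rho_{t-1}$ gives
\[
\text{Pr}(\rho_t=\lambda) = \sum_{\bm{\gamma}_t} \text{Pr}(\bm{\gamma}_t) \sum_{\rho_{t-1}} p_m(\rho_{t-1}) \cdot \frac{p_m(\lambda)\, \text{I}[\lambda|_{R_t} = \rho_{t-1}|_{R_t}]}{p_{|R_t|}(\rho_{t-1}|_{R_t})}.
\]
The indicator forces $\rho_{t-1}|_{R_t} = \lambda|_{R_t}$, so I would pull $p_m(\lambda)/p_{|R_t|}(\lambda|_{R_t})$ out of the inner sum and apply the addition rule a second time to obtain $\sum_{\rho_{t-1}: \rho_{t-1}|_{R_t}=\lambda|_{R_t}} p_m(\rho_{t-1}) = p_{|R_t|}(\lambda|_{R_t})$. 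After cancellation, the remaining $\sum_{\bm{\gamma}_t}\text{Pr}(\bm{\gamma}_t)$ equals $1$, yielding $\text{Pr}(\rho_t=\lambda) = p_m(\lambda)$ and closing the induction.

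The main obstacle I anticipate is formalizing compatibility as the restriction identity $\rho_t|_{R_t} = \rho_{t-1}|_{R_t}$ and verifying that the addition rule extends from its standard one-item-at-a-time form to arbitrary subsets $R_t$ (this follows by iterating and using exchangeability to relabel). Once these two points are secured, the argument reduces to a single bookkeeping-free application of the law of total probability combined with two invocations of sample size consistency.
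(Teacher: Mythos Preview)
Your proposal is correct and follows essentially the same approach as the paper. In particular, your computation of the normalized conditional as $p_m(\lambda)/p_{|R_t|}(\lambda|_{R_t})$ on the compatible set is precisely the content of the paper's Proposition~\ref{proposition2} (with $\rho_t^{-N_{0t}}$ denoting your $\rho_t|_{R_t}$), and the inductive marginalization via two applications of the addition rule is the intended argument; the two ``obstacles'' you flag---reading compatibility as the restriction identity and iterating sample size consistency with exchangeability to pass to arbitrary subsets---are exactly the points the paper's supplementary proof addresses.
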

\begin{proof}
See supplementary material.
\end{proof}

In what follows, we will use $tRPM(\bm{\alpha}, M)$ to denote our temporal
random partition model \eqref{joint.joint.model} parameterized by
$\alpha_1, \ldots, \alpha_T$ and the EPPF in \eqref{crp}. We briefly
mention that introducing $\gamma_{it}$ is similar in spirit to the
approach taken by \cite{Caron:2007, caron:2017}. However, they use
$\bm{\gamma}_t$ to identify a partial partition at time  $t$ that informs
how {\it all} the observational units will be reallocated at time  $t+1$.
While this difference may seem inconsequential at first glance, it has
drastic ramifications on the type of dependence that exists among the
actual sequence of partitions. This is illustrated in Figures \ref{caron2}
and \ref{ours2} provided in the Introduction.  The sequence of partitions
used to create Figure  \ref{ours2} were generated using the
$tRPM(\bm{\alpha}, M)$ with $M=0.5$. As mentioned in the Introduction,
when the main emphasis is on modeling a ``smoothly'' evolving sequence of
random partitions, the temporal dependence displayed in Figure \ref{ours2}
is much more natural than that found in Figure \ref{caron2}.

\subsection{Dependence in Partitions}
We now further explore how our method models dependence across partitions.
To do this, we analyze closeness between partitions $\rho_1$ and $\rho_2$
by way of co-clustering  of cluster labels $(c_{11},\ldots,c_{m1})$ and
$(c_{12},\ldots,c_{m2})$, respectively.  We base our exploration on the
Rand index which is defined a
$$R(\rho_1,\rho_2)=\frac{a+b}{\binom{m}{2}},$$
where $a$ is the number of pairs $(i,j)$ with $i,j\in [m]=\{1,\ldots, m\}$ that simultaneously co-cluster in $\rho_1$ and
$\rho_2$ and $b$ is the number of such pairs that simultaneously do not co-cluster. Writing
$\varphi_{ij}=P(c_{i1}=c_{j1},c_{i2}=c_{j2}) + P(c_{i1}\ne c_{j1},c_{i2}\ne c_{j2})$,
we note that
$$E\left[R(\rho_1,\rho_2) \right]=\binom{m}{2}^{-1}\sum_{1\le i<j\le m} \varphi_{ij}.$$
To provide context to the co-clustering probabilities of cluster labels based on $tRPM(\alpha,M)$, we consider the model proposed in~\cite{Caron:2007}. 
In their approach and assuming $\rho_1\sim CRP(M)$, each $i\in [m]$ is
randomly removed from the partition with probability $1-\alpha$, and
$\rho_2$ is formed by running an extra $CRP(M)$ process, but starting from
an urn that has weights given by the normalized cluster sizes left from
the removal process. See details in~\cite{Caron:2007}. We denote
partitions that follow the model of \cite{Caron:2007} as
$\rho_1,\rho_2\sim CAR(\alpha,M)$. For both our method and
$CAR(\alpha,M)$, the case that $\alpha=0$ leads to $\rho_1,\rho_2
\stackrel{iid}{\sim}CRP(M)$, while the largest degree of dependence
between $\rho_1$ and $\rho_2$ is achieved when $\alpha=1$.  The following
proposition characterizes the co-clustering probabilities under our method
and $CAR(\alpha,M)$.
\begin{proposition}\label{prop:corr}
Let $m=T=2$, so that $E\left[R(\rho_1,\rho_2) \right] = \varphi_{12}$.
\begin{itemize}
\item[(a)] If $\rho_1,\rho_2\sim tRPM(\alpha,M)$, where to simplify
    notation we write $\alpha\equiv \alpha_2$, then
$$\varphi_{12}=\alpha^2+\frac{(1+M^2)}{(1+M)^2}(1-\alpha)^2.$$
\item[(b)] If $\rho_1,\rho_2\sim CAR(\alpha,M)$ then
$$\varphi_{12}=
\left[\frac{6+3M+4M^2+M^3)}{(M+1)(M+2)(M+3))}\right] \alpha^{2} +
\frac{(1+M^{2})}{(1+M)^{2}} \left(1-\alpha^{2}\right).$$
\end{itemize}
\end{proposition}
\begin{proof}
See supplementary material.
\end{proof}

An interesting consequence of Proposition~\ref{prop:corr} is that we can
compute the expected value of the Rand index in the case $\rho_1,\rho_2
\stackrel{iid}{\sim}CRP(M)$.

\begin{corollary}\label{prop:coro}
If $\rho_1,\rho_2 \stackrel{iid}{\sim}CRP(M)$ then for any $m\ge 2$,
$$E\left[R(\rho_1,\rho_2) \right]=\frac{(1+M^{2})}{(1+M)^{2}}.$$
\end{corollary}
\begin{proof}
The result follows immediately by noting that the i.i.d. case coincides
with $tRPM(0,M)$ and that by exchangeability and independence,
$\varphi_{ij}=\varphi_{12}$ for all $1\le i<j\le m$.
\end{proof}

The result from Proposition~\ref{prop:corr} (a) shows that, under the
$tRPM(\alpha,M)$ model, $\lim_{\alpha\rightarrow
0^+}\varphi_{12}=\frac{(1+M^{2})}{(1+M)^{2}}$, i.e., it agrees with
$E\left[R(\rho_1,\rho_2) \right]$ under the i.i.d. case. The same holds as
$\alpha\rightarrow 0^+$ under the $CAR(\alpha,M)$ model. Furthermore, for
the $tRPM(\alpha,M)$, we get the appealing result that
$\lim_{\alpha\rightarrow 1^+}\varphi_{12}=1$, but the same limit under the
$CAR(\alpha,M)$ is a number strictly less than $1$ for any $M>0$. This
reveals that the closeness between partitions under the proposed
$tRPM(\alpha,M)$, as measured by the $\varphi_{12}$ quantity, can attain
its maximum value of 1, which simply corresponds to the case where
none of the units is relocated.  The same cannot hold for the
$CAR(\alpha,M)$ model because partitions are linked through a latent
mechanism rather than directly as in the proposed model. Finally, we
conjecture that similar results can be obtained for $m>2$ but calculations
become more involved. The result from Corollary~\ref{prop:coro} is
nevertheless valid for any $m\ge 2$.

\subsection{Toy Example to Illustrate Conditional Model} \label{toy.example}

To build intuition regarding the transition from $\rho_{t-1}$ to $\rho_t$,
consider the conditional probabilities in equation
\eqref{joint.joint.model} and the very simple scenario of $m=3$ and $T=2$.
We have that
\begin{align*}
\text{Pr}(\rho_2 \mid  \rho_{1}) =  \sum_{\bm{\gamma}_{2} \in \Gamma}  \text{Pr}(\rho_2\mid   \bm{\gamma}_{2}, \rho_{1})\text{Pr}(\bm{\gamma}_{2}),
\end{align*}
where again, $\Gamma$ is the collection of all possible binary $3$-tuples
and operate under $\rho_1 \sim CRP(M)$. The conditional probabilities are
provided in Table \ref{ConditionalCRP}, where we set $M=1$ for simplicity.
From Table \ref{ConditionalCRP}, notice that $\text{Pr}(\rho_2 \mid
\rho_{1})$ is a reweighted CRP and that, as $\alpha \rightarrow 0$,
partition probabilities correspond to those from the original CRP and, as
$\alpha \rightarrow 1$,  $\text{Pr}(\rho_2 = \rho_1) \rightarrow  1$.
Further, notice that partitions associated with $\rho_2$ that are more
similar to $\rho_1$ are given larger weight relative to a CRP.  For
example, given $\rho_1 = \{\{1, 2, 3\}\}$ then $\rho_2 =\{ \{1, 2\},
\{3\}\}$ has higher probability than $\rho_2 = \{\{1\}, \{2\}, \{ 3\}\}$
for any $\alpha > 0$ but have equal probability in a CRP.  From this toy
example we see that the conditional co-clustering probabilities display
dependencies in line with the desire to have partitions evolve gently over
time.

\spacingset{1}
\begin{table}[t]
\caption{Partition probabilities from the conditional distribution
$\text{Pr}(\rho_2 \mid  \rho_1)$ using a CRP EPPF}

\hspace*{-0cm} \resizebox{\columnwidth}{!}{\begin{tabular}{l@{\hskip 0.01in}  c@{\hskip 0.05in} l @{\hskip 0.05in} l @{\hskip 0.05in} l@{\hskip 0.05in} l @{\hskip 0.05in} l }
\multicolumn{1}{c}{$(c_1,c_2,c_3)$}  &  $\text{Pr}(\rho_1)$  & \multicolumn{1}{c}{$\text{Pr}(\rho_2 \mid  \rho_1 = a)$}  & \multicolumn{1}{c}{$\text{Pr}(\rho_2 \mid  \rho_1 = b)$} & \multicolumn{1}{c}{$\text{Pr}(\rho_2 \mid  \rho_1 = c)$} & \multicolumn{1}{c}{$\text{Pr}(\rho_2 \mid  \rho_1 = d)$} & \multicolumn{1}{c}{$\text{Pr}(\rho_2 \mid  \rho_1 = e)$}\\ \midrule
 $a = (1,1,1)$ 	& $\frac{2}{6}$ 	& $\frac{2}{6}[1 + 3\alpha^2 - \alpha^3]$  	& $\frac{2}{6} [1- \alpha^2]$ 			& $\frac{2}{6} [1- \alpha^2]$ 				& $\frac{2}{6}  [1- \alpha^2]$ 				& $\frac{2}{6} [1-3\alpha^2 + 2\alpha^3]$ \\ [4pt]
 $b = (1,1,2)$ 	& $\frac{1}{6}$ 	& $\frac{1}{6}[1- \alpha^2 ]$ 			& $\frac{1}{6} [1+3\alpha^2 + 2\alpha^3]$  & $\frac{1}{6} [1- \alpha^2]$  				&  $\frac{1}{6} [1- \alpha^2]$				& $\frac{1}{6} [1+ \alpha^2 - 2\alpha^3]$  \\ [4pt]
 $c=  (1,2,1)$ 	& $\frac{1}{6}$ 	& $\frac{1}{6}[1- \alpha^2 ]$  			& $\frac{1}{6} [1- \alpha^2]$  			& $\frac{1}{6} [1 + 3\alpha^2 + 2\alpha^3]$  	& $\frac{1}{6}  [1- \alpha^2]$  				& $\frac{1}{6} [1+ \alpha^2 - 2\alpha^3]$  \\ [4pt]
 $d = (1,2,2)$ 	& $\frac{1}{6}$ 	& $\frac{1}{6}[1- \alpha^2 ]$  			& $\frac{1}{6} [1- \alpha^2]$   			& $\frac{1}{6} [1- \alpha^2]$   				& $\frac{1}{6}  [1 + 3\alpha^2 + 2\alpha^3]$  	& $\frac{1}{6} [1+ \alpha^2 - 2\alpha^3]$   \\ [4pt]
 $e = (1,2,3)$ 	& $\frac{1}{6}$ 	& $\frac{1}{6}[1-3\alpha^2 + 2\alpha^3]$   	&  $\frac{1}{6} [1+ \alpha^2 - 2\alpha^3]$  & $\frac{1}{6} [1+ \alpha^2 - 2\alpha^3]$    	&  $\frac{1}{6} [1+ \alpha^2 - 2\alpha^3]$    	& $\frac{1}{6} [1 + 3\alpha^2 + 2\alpha^3]$    \\ \midrule
\end{tabular}}
\label{ConditionalCRP}
\end{table}%
\spacingset{1.5}
\subsection{Hierarchical Data Model}\label{hierarchical.model}

Once a partition model is specified, there is tremendous flexibility
regarding how to model time (global or cluster-specific) at different
levels of a hierarchical model (at the data level, process level, or
both).  Since we are interested to see how including time in the partition
model impacts clustering and model fits, in the simulations of Section
\ref{sec:simulations.studies}, we consider a hierarchical model where time
only appears in the partition model.  In particular, using cluster label
notation, we will employ the following hierarchical model
\begin{equation}\label{dat.gen.mech}
  \begin{aligned}
Y_{it} \mid  \bm{\mu}^{\star}_{t}, \bm{\sigma}_t^{2\star}, \bm{c}_{t} & \stackrel{ind}{\sim} N(\mu^{\star}_{c_{it}t}, \sigma_{c_{it}t}^{2\star}), \ i = 1, \ldots, m \ \mbox{and} \ t=1, \ldots, T,   \\
(\mu_{jt}^{\star}, \sigma^{\star}_{jt})\mid \theta_t, \tau_t^2 & \stackrel{ind}{\sim} N(\theta_t, \tau_t^2) \times UN(0,A_{\sigma}), \ j = 1, \dots, k_t ,\\
(\theta_t, \tau_t) & \stackrel{iid}{\sim} N(\phi_0, \lambda^2) \times UN(0, A_{\tau}), \ t = 1, \ldots, T,\\
(\phi_0, \lambda) & \sim N(m_0, s^2_0) \times UN(0, A_{\lambda}), \\
\{\bm{c}_{t}, \ldots, \bm{c}_{T}\} & \sim tRPM(\bm{\alpha}, M), \ \mbox{with $\alpha_t \stackrel{iid}{\sim} Beta(a_{\alpha}, b_{\alpha})$},
  \end{aligned}
\end{equation}
where  $Y_{it}$ denotes the response measured on the $i$th unit  at time
$t$, $UN$ denotes a uniform distribution and $A_{\sigma}$, $A_{\tau}$,
$A_{\lambda}$, $m_0$, $s^2_0$, $a_{\alpha}$, $b_{\alpha}$,  $M$ are
user-supplied hyper-parameters.  The remaining assumptions (e.g.,
independence across clusters and exchangeability within each cluster) are
commonly employed.

\subsection{Computation} \label{PostComputation}

As the posterior distribution implied by model \eqref{dat.gen.mech} is not
of a known form, we build an algorithm to sample from it.  The
construction of $\text{Pr}(\rho_1, \ldots, \rho_T)$ naturally leads one to
consider a Gibbs sampler. In the Gibbs sampler, $\bm{\gamma}_t$ will need
to be updated in addition to $\rho_t$ (by way of $\bm{c}_t$). But the
Markovian assumption reduces some of the cost as we only need to consider
$\rho_{t-1}$ and $\rho_{t+1}$ when updating $\rho_t$. Even though each
update of $\rho_t$ and $\bm{\gamma}_t$ for $t=1,\ldots, T$ needs to be
checked for compatibility, it is fairly straightforward to adapt standard
algorithms, e.g. Algorithm 8 of \cite{neal:2000}, with care to make sure
that only experimental units with $\gamma_{it} = 0$ are considered when
updating $c_{it}$. Here we provide a general sketch for updating $c_{it}$
and $\gamma_{it}$ within an MCMC algorithm, with much more detail provided
in Section \ref{MCMC} the online supplementary material

The MCMC algorithm we employ depends on deriving the complete conditionals
for $\rho_t$ and $\gamma_t$. A key result needed to derive them  is
provided in the following proposition.
\begin{proposition}\label{proposition2}
Based on the construction of a joint probability model as described in
Section \ref{dependentPartitions}, we have
\begin{align}
\textup{Pr}(\rho_t \mid  \bm{\gamma}_t, \rho_{t-1}) = \left\{
\begin{array}{l l}
 \textup{Pr}(\rho_t)/\textup{Pr}(\rho^{\mathfrak{R}_t}_t)  &    \mbox{if  $\rho^{\mathfrak{R}_t}_{t-1} = \rho^{\mathfrak{R}_t}_t$}\\[3pt]
 0 &     otherwise.
\end{array}
\right.
\end{align}

\end{proposition}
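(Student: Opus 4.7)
The plan is to start from the explicit formula for the conditional distribution given just before the proposition, namely
\[
\text{Pr}(\rho_t = \lambda \mid \bm{\gamma}_t, \rho_{t-1}) = \frac{\text{Pr}(\rho_t = \lambda)\,\text{I}[\lambda \in P_{C_t}]}{\sum_{\lambda'} \text{Pr}(\rho_t = \lambda')\,\text{I}[\lambda' \in P_{C_t}]},
\]
and show that the denominator equals $\text{Pr}(\rho_t^{-N_{0t}})$. The case $\rho_t \not\asymp \rho_{t-1}$ is immediate from the indicator, so the content of the proposition is concentrated in the compatible case.

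First I would unpack compatibility. By construction, each unit $i$ with $\gamma_{it}=1$ must retain its cluster assignment from $\rho_{t-1}$, while each unit with $\gamma_{it}=0$ is unrestricted. Consequently $P_{C_t}$ is exactly the set of partitions $\lambda$ of $\{1,\ldots,m\}$ whose restriction to the $m-N_{0t}$ ``kept'' units coincides with the restriction of $\rho_{t-1}$ to those same units. Call that reduced partition $\rho^{-N_{0t}}_t$; under compatibility it is the same object whether obtained by restricting $\rho_t$ or $\rho_{t-1}$, so the notation is unambiguous.

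Second I would invoke sample-size consistency (the addition rule) of the exchangeable EPPF, which the paper has already highlighted as a defining property of the parent partition model. Removing a single unlabelled element and summing over where it could have been placed returns the EPPF probability of the partition on the remaining $m-1$ units. Iterating this $N_{0t}$ times for the units with $\gamma_{it}=0$ yields
\[
\sum_{\lambda \in P_{C_t}} \text{Pr}(\rho_t = \lambda) \;=\; \text{Pr}(\rho^{-N_{0t}}_t),
\]
since summing the full-$m$ EPPF over all extensions of a fixed reduced partition on the kept units is precisely the marginalization operation that the addition rule evaluates. Substituting this into the denominator of the conditional formula gives the claimed expression $\text{Pr}(\rho_t)/\text{Pr}(\rho^{-N_{0t}}_t)$.

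The main obstacle is really just bookkeeping: verifying that the constraint imposed by $\bm{\gamma}_t$ and $\rho_{t-1}$ translates to fixing a specific reduced partition on the kept units, and that the $N_{0t}$ free units can be re-inserted one at a time so that the addition rule applies iteratively. Nothing deeper than sample-size consistency of the parent EPPF is needed, so the argument is short once the identification $P_{C_t} = \{\lambda : \lambda \text{ extends } \rho^{-N_{0t}}_t\}$ is made explicit.
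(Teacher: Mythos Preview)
Your proposal is correct and matches the paper's approach: the argument reduces to showing that the normalizing constant $\sum_{\lambda'\in P_{C_t}}\text{Pr}(\rho_t=\lambda')$ equals $\text{Pr}(\rho_t^{-N_{0t}})$, and this follows by iterating the addition rule (sample-size consistency) of the parent EPPF to marginalize out the $N_{0t}$ free units one at a time. Your identification of $P_{C_t}$ with the set of extensions of the fixed reduced partition on the kept units, and your remark that $\rho_t^{-N_{0t}}$ is unambiguously determined by $(\bm{\gamma}_t,\rho_{t-1})$ under compatibility, are exactly the bookkeeping points needed.
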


\begin{proof}
See the supplementary material.
\end{proof}

When updating $\gamma_{it}$ in a Gibbs sampler, one can think of removing
$\gamma_{it}$ from $\bm{\gamma}_t$, and then reinsert it as either a 0 or
1. To this end let $\mathfrak{R}^{(-i)}_{t} = \mathfrak{R}_{t} \mysetminus
\{i\}$ and $\mathfrak{R}^{(+i)}_{t} = \mathfrak{R}^{(-i)}_{t} \cup \{i\}$
and let $\bm{\gamma}_{t,+i}$ denote the $\bm{\gamma}_t$ vector with the
$i$th entry set to 1.  Then the full conditional for $\gamma_{it} = 1$,
denoted by $\text{Pr}(\gamma_{it} = 1 \mid  -)$, is 
\begin{align*}
\text{Pr}(\gamma_{it} = 1 \mid  -) &  \propto \text{Pr}(\rho_t \mid \bm{\gamma}_{t,+i},  \rho_{t-1}) \text{Pr}(\bm{\gamma}_{t,+i})\text{I}[\rho^{\mathfrak{R}^{(+i)}_{t}}_{t-1} = \rho^{\mathfrak{R}^{(+i)}_{t}}_{t}], \\
                  		  & \propto \frac{\text{Pr}(\rho_t)}{\text{Pr}(\rho^{\mathfrak{R}^{(+i)}_{t} }_t)} \alpha_t^{\gamma_{it}}\text{I}[\rho^{\mathfrak{R}^{(+i)}_{t}}_{t-1} = \rho^{\mathfrak{R}^{(+i)}_{t}}_{t}],
\end{align*}
which results in
\begin{align}\label{full.conditional.gamma1}
\text{Pr}(\gamma_{it} = 1 \mid  -) = \displaystyle\frac{\alpha_t \text{Pr}(\rho^{\mathfrak{R}^{(-i)}_{t} }_t)}{\alpha_t \text{Pr}(\rho^{\mathfrak{R}^{(-i)}_{t} }_t) + (1-\alpha_t) \text{Pr}(\rho^{\mathfrak{R}^{(+i)}_{t} }_t)}\text{I}[\rho^{\mathfrak{R}^{(+i)}_{t}}_{t-1} = \rho^{\mathfrak{R}^{(+i)}_{t}}_{t}].
\end{align}
For a given EPPF that has a closed form (e.g., CRP), it is straightforward
to compute $\text{Pr}(\rho^{\mathfrak{R}^{(-i)}_{t} }_t)$ and
$\text{Pr}(\rho^{\mathfrak{R}^{(+i)}_{t} }_t)$. If, however, the EPPF does
not have a closed form, then note that \eqref{full.conditional.gamma1} can
be re-expressed as
\begin{align}\label{full.conditional.gamma1b}
\text{Pr}(\gamma_{it} = 1 \mid  -) = \displaystyle\frac{\alpha_t}{\alpha_t  + (1-\alpha_t)\text{Pr}(\rho^{\mathfrak{R}^{(+i)}_{t} }_t)/\text{Pr}(\rho^{\mathfrak{R}^{(-i)}_{t} }_t)}\text{I}[\rho^{\mathfrak{R}^{(+i)}_{t}}_{t-1} = \rho^{\mathfrak{R}^{(+i)}_{t}}_{t}].
\end{align}
The quantity $\text{Pr}(\rho^{\mathfrak{R}^{(+i)}_{t}
}_t)/\text{Pr}(\rho^{\mathfrak{R}^{(-i)}_{t} }_t)$ is a commonly
encountered expression in MCMC methods that employ Neal's  Algorithm 8
\citep{neal:2000}. Those same methods can be employed to calculate the
desired probabilities. See Section \ref{gamma.update} of the online supplementary material for more
detail.

When updating $c_{it}$ note that, within the MCMC algorithm, only those
$c_{it}$ for which $\gamma_{it} = 0$ are updated. Thus
$\rho^{\mathfrak{R}_{t}}_{t-1} = \rho^{\mathfrak{R}_{t}}_{t}$ by
construction.  As a result, only compatibility between $\rho_{t}$ and
$\rho_{t+1}$ (i.e.,  $\rho^{\mathfrak{R}_{t+1}}_{t} =
\rho^{\mathfrak{R}_{t+1}}_{t+1}$) needs to be checked when updating
$c_{it}$.  Now letting $\text{Pr}(c_{it} = h) = \text{Pr}(c_{1t}, \ldots,
c_{it} = h, \ldots, c_{mt})$ and denoting the partition based on
$\{c_{1t}, \ldots, c_{it} = h, \ldots, c_{mt}\}$  as $\rho_{t:c_{it}=h} =
\{ S_{1t}^{-i}, \ldots, S_{ht}^{-i}\cup \{i\}, \ldots, S_{k_t^{-i}
t}^{-i}\}$ where $S^{-i}_{jt}$ denotes the $j$th cluster at time $t$ with
the $i$th unit removed (note it is possible that  $S^{-i}_{jt} = S_{jt}$),
the  full conditional multinomial probability for $c_{it}$  is
\begin{align*}
\text{Pr}(c_{it} = h \mid  -) \propto
\left\{
\begin{array}{cl}
N(Y_{it} \mid  \mu^{\star}_{c_{it} = h, t}, \sigma^{2\star}_{c_{it} = h, t})\text{Pr}(c_{it} = h)\text{I}[\rho^{\mathfrak{R}_{t+1}}_{t:c_{it}=h} = \rho^{\mathfrak{R}^{t+1}}_{t+1}]  & \mbox{for $h = 1, \ldots, k_t^{-i}$, }   \\
N(Y_{it} \mid  \mu^{\star}_{new_h, t}, \sigma^{2\star}_{new_h, t})\text{Pr}(c_{it} = h)\text{I}[\rho^{\mathfrak{R}_{t+1}}_{t:c_{it}=h} = \rho^{\mathfrak{R}^{t+1}}_{t+1}] &    \mbox{for $h = k_t^{-i}+1$,}
\end{array}
\right.
\end{align*}
where  $\mu^{\star}_{new_h, t}$ along with $\sigma^{2\star}_{new_h, t}$
are auxiliary parameters drawn from the prior as in \cite{neal:2000}'s
Algorithm 8 (with one auxiliary parameter) and $k_t^{-i}$  is the number 
of clusters at time $t$ when the $i$th unit has been removed. Further $N(\cdot \mid m, s^2)$ denotes a normal density with mean $m$ and variance $s^2$. Given
$\rho_t$ and $\bm{\gamma}_t$, the full conditionals of the remaining
parameters in model \eqref{dat.gen.mech} follow standard techniques. A
sample can be drawn from the posterior distribution implied by  model
\eqref{dat.gen.mech} by iterating through the complete conditionals for
$\bm{\gamma}_t$ and $\rho_t$ and those of other model parameters. See Section \ref{rho.update} of the
online supplementary material for more detail. 

In our experience the MCMC algorithm described here and in Section \ref{MCMC} of the online supplementary material generally behaves well with regards to mixing and convergence.  However,   applications  where $\alpha \approx 1$ can negatively affect the performance of the algorithm.  Having a parameter close to the boundary of its support commonly produces computational issues.  It is possible to mitigate this by selecting a prior for $\alpha$ that keeps it from its boundary. Alternatively, a specialized algorithm will be needed to accommodate the boundary effect.

\section{Simulation Studies}\label{sec:simulations.studies}
In this section we detail three simulation studies that illustrate
different aspects of our modeling approach. In Section \ref{simulation} of
the online supplemental material, we provide additional simulation
results and details regarding a fourth simulation study that
considers  the performance of our method when the response exhibits
spatio-temporal dependence.

\subsection{Simulation 1: Temporal Dependence in Estimated Partitions}

The purpose of the first simulation is to study the accuracy of partition
estimates (i.e., $\hat{\rho}_t$) and how much they change over time. (For
a discussion of how $\hat{\rho}_t$ is obtained, see below.)  In addition,
we explore accuracy in estimating $\mu_{it} = \mu^{\star}_{c_{it}t}$ and
$\alpha_t$. To this end, we considered model \eqref{dat.gen.mech} as a
data generating mechanism to create one hundred datasets with fifty
observations at five time points. We used $tRPM(\bm{\alpha}, M)$  with
$\alpha_t=\alpha$ for all $t$ and $M=1$. We generate synthetic datasets
under  $\alpha \in \{0, 0.1, 0.25, 0.5, 0.75, 0.9, 0.999\}$. For all $i$
and $t$, we set $\sigma_{c_{it}t}^{2\star} = \sigma^2 = 1$, $\tau^2 = 25$,
and $\theta_t = 0$.

To each synthetic data set we fit model \eqref{dat.gen.mech} using the
MCMC algorithm detailed in Section \ref{PostComputation} by collecting
10,000 iterates and discarding the first 5,000 as burn-in and thinning by
5 (resulting in 1,000 MCMC samples). As prior parameters we used $A_{\sigma} = 5$, $A_{\tau} = 10$, $A_{\lambda} = 10$, $m_0 = 0$, $S^2_0 = 100$, $a_{\alpha} = b_{\alpha} = M=1$.  For simplicity we set $\alpha_t = \alpha$ for all $t$.  All partition point
estimates were estimated using the method in the {\tt salso} R
package (\citealt{salso:2020}) with the binder loss function
(\citealt{binder:1978}). To measure similarity between
partitions, we employed the adjusted Rand index
(\citealt{rand:1971,hubert&arabie:1985}) and we used WAIC
(\citealt{gelman&hwang&vehtari:2014}) to measure model fit.

Table \ref{sampleARI} displays the lagged 1 and  4 adjusted Rand index
(ARI) as a function of $\alpha$.  As expected, for both lags, the ARI
increases as $\alpha$ increases.  Also as expected, lagged 4 ARI increases
less as a function of $\alpha$ compared to the lagged 1 ARI.  Note that on
average the lagged 1 ARI for $\alpha \in \{ 0.1, 0.25\}$ is smaller than
that for $\alpha = 0$. This is because the variability associated with
lagged 1 ARI when $\alpha=0$ is much larger than when $\alpha > 0$,
producing a few lagged ARI values that are large. The median of the lagged
ARI values  increase as a function of $\alpha$ monotonically.

\spacingset{1}
\begin{table}[t]
\caption{Adjusted Rand index when comparing $\hat{\rho}_1$ to
$\hat{\rho}_2$ and $\hat{\rho}_1$ to $\hat{\rho}_5$. Note that $ARI(\cdot,
\cdot)$ denotes the adjusted Rand index as a function of two partitions.
Coverage rates for $\alpha$ and $\mu_{it}$ and model fit metrics for
$tRPM(\alpha, M)$ and $CRP(M)$. These values are averaged over the 100
generated data sets. The values in parenthesis are Monte Carlo standard
errors. Note that smaller values of WAIC indicate better fit.}
\vspace{0.25 cm} \centering
\begin{tabular}{l cc cc cc}
  \toprule
  \multicolumn{3}{c}{} &   \multicolumn{2}{c}{Coverage} &   \multicolumn{2}{c}{WAIC} \\ \cmidrule(r){4-5} \cmidrule(r){6-7}
 & $ARI(\hat{\rho}_1, \hat{\rho}_2)$ & $ARI(\hat{\rho}_1, \hat{\rho}_5)$ & $\alpha$ & $\mu_{i t}$ & tRPM & CRP\\
  \midrule
$\alpha=0.0$ 		& 0.05 (0.02) & 0.01 (0.01) & 0.00 (0.00) & 0.93 (0.01) & 896 & 892 \\ 
$\alpha=0.1$ 		& 0.04 (0.01) & 0.00 (0.01) & 0.96 (0.02) & 0.92 (0.01) & 910 & 907 \\ 
$\alpha=0.25$ 		& 0.19 (0.03) & 0.02 (0.02) & 0.90 (0.03) & 0.91 (0.01) & 890 & 891 \\ 
$\alpha=0.5$ 		& 0.44 (0.02) & 0.04 (0.01) & 0.82 (0.04) & 0.91 (0.01) & 881 & 903 \\ 
$\alpha=0.75$ 		& 0.67 (0.02) & 0.23 (0.02) & 0.89 (0.03) & 0.92 (0.01) & 822 & 893 \\ 
$\alpha=0.9$ 		& 0.87 (0.01) & 0.55 (0.02) & 0.90 (0.03) & 0.90 (0.01) & 816 & 896 \\ 
$\alpha=0.9999$ 	& 0.97 (0.01) & 0.93 (0.01) & 0.58 (0.05) & 0.90 (0.01) & 795 & 888 \\ 
   \bottomrule
\end{tabular}
\label{sampleARI}
\end{table}%
\spacingset{1.5}

To study the ability to recover $\mu_{it}$ and $\alpha$,  95\% credible
intervals for each were computed and coverage was estimated.  Results are
provided in Table \ref{sampleARI}. Notice that coverage for $\alpha$ is
low when the true $\alpha$ is at or near the boundary (e.g., $\alpha \in
\{0, 0.9999\})$ which is to be expected. The coverage associated with
$\mu_{it}$ is close to the nominal rate regardless of the value of
$\alpha$.  Therefore, temporal dependence in the partition model does not
adversely impact the ability to estimate individual means.

Lastly, to compare model fit when using $tRPM(\alpha, M)$ as the RPM in
model \eqref{dat.gen.mech} relative to $\rho_t \stackrel{iid}{\sim}
CRP(M)$, we calculated the WAIC for each data set when fitting model
\eqref{dat.gen.mech} under both RPMs. Results are provided in Table
\ref{sampleARI} where each entry is an average WAIC value over all 100
datasets. Notice that, when the independent partitions were used to
generate data (i.e., $\alpha = 0$), modeling partitions independently
produces slightly better model fit as would be expected. But even if
relatively weak temporal dependence exists among the sequence of
partitions, there are gains in modeling the sequence of partitions with
$tRPM(\alpha, M)$, with gains becoming substantial as $\alpha$ increases.

The upshot from this  simulation study is that lagged partition estimates
when employing  $tRPM(\alpha, M)$ display intuitive behavior in that
similarity between partition estimates decreases as lag increases.  In
addition, employing the $tRPM(\alpha, M)$ partition model does not
negatively impact parameter estimation and produces improved model fits
when dependence is present in the sequence of partitions and a minimal
cost in model fit when it is not.

\subsection{Simulation 2: Induced Correlation at the Response Level}
A potential benefit of developing a joint model for partitions is the
ability to accommodate temporal dependence that may exist between $Y_{it}$
and $Y_{it+1}$. To study this, we conducted a small Monte Carlo simulation
study that is comprised of sampling repeatedly from the $tRPM(\alpha, M)$
using the computational approach of Section \ref{PostComputation}. Once
the partition is generated, the temporal dependence among the $\bm{Y}_{i}$
depends on specific model choices for $\mu_{jt}^{\star}$. Here we use
$\mu^{\star}_{jt} \sim N(\phi_1\mu_{jt-1}^{\star}, \tau^2(1-\phi_1^2))$
for $t > 2$,  $j=1, \ldots, k_t$, and $|\phi_1| \le 1$.  For $t=1$ we use
$\mu^{\star}_{j1} \sim  N(0, \tau^2)$ and if $k_{t+1} > k_t$ new
$\mu^{\star}_{jt+1}$ values are drawn from $N(0, \tau^2)$. Now setting
$m=25$, $T=10$,  $\tau=10$, and $\sigma=1$,  100 datasets were generated
for $\phi_1 \in \{0,  0.25, 0.5, 075, 0.9, 1\}$. For each data set
generated,  the lagged auto-correlations among $\bm{Y}_i$ were computed
for $i=1,\ldots, m$. The results found in Figure
\ref{MarginalCorrelations} are the lagged auto-correlations averaged over
the $m$ units for $\alpha \in \{0,  0.25, 0.5, 0.75, 0.9\}$.

As can be seen in Figure \ref{MarginalCorrelations}, when partitions are
independent (i.e., $\alpha = 0$), no correlation propagates to the data
level. The same can be said if atoms are $iid$ (i.e., $\phi_1 = 0$).   As
the temporal dependence among $\mu_{jt}^{\star}$ increases (i.e., $\phi_1$
increases), there is stronger temporal dependence among $Y_{i1}, \ldots,
Y_{iT}$. Notice further that this  dependence persists  longer in time as
$\alpha$ increases as one would expect.

\spacingset{1}
\begin{figure}[t]
\begin{center}
\includegraphics[scale=0.5]{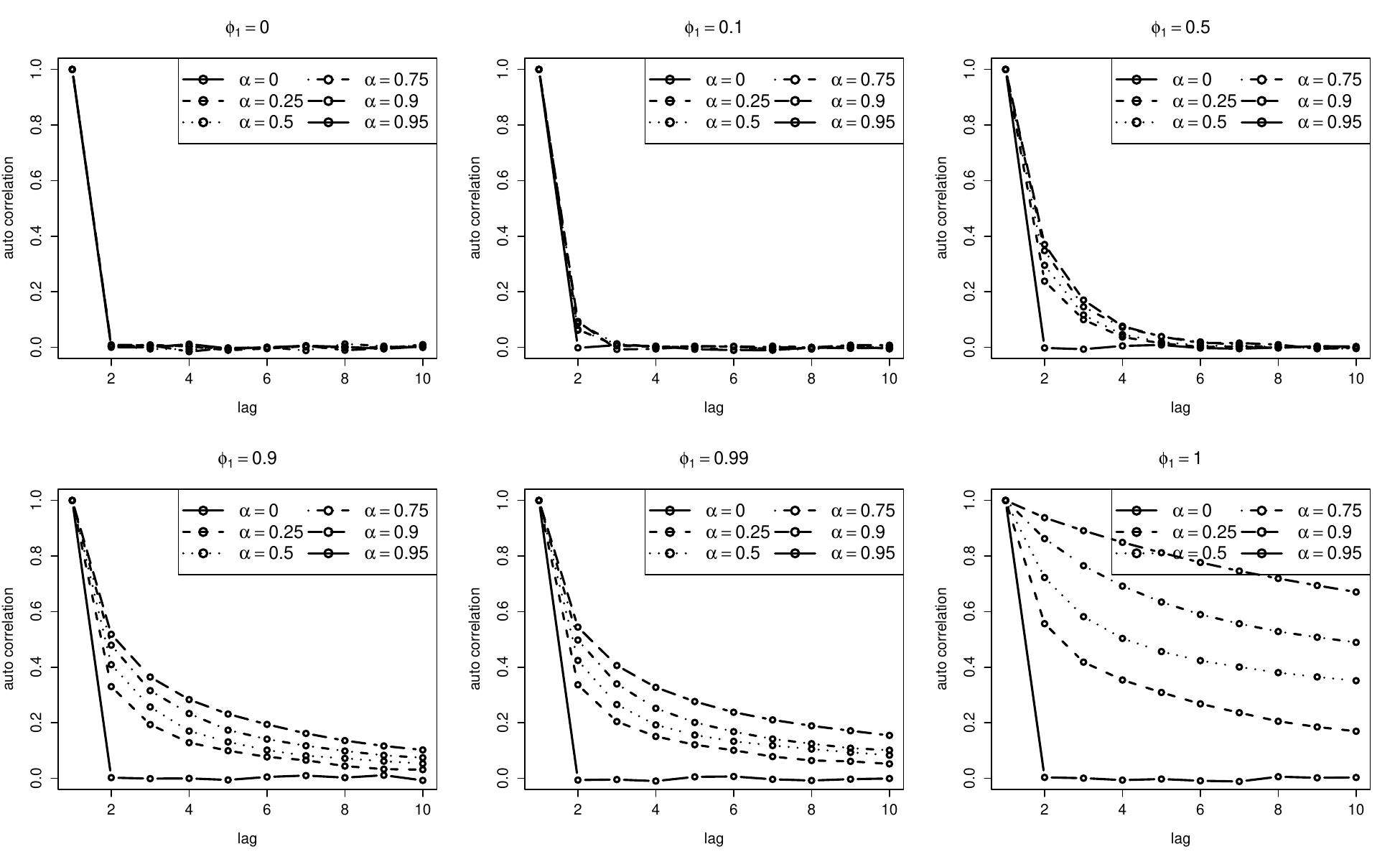}
\caption{Lagged auto-correlations among the $(Y_{i1}, \ldots, Y_{iT})$ when
modeling $\mu^{\star}_{jt}$ with an AR(1) type structure. }
\label{MarginalCorrelations}
\end{center}
\end{figure}
\spacingset{1.5}

\subsection{Simulation 3: AR(1)-type synthetic data} \label{sim.study3}
In our final simulation experiment, we consider data generated from an
AR(1) process. To create synthetic datasets for the $i$th unit, we employ
the following as a data generating mechanism
\begin{align*}
Y_{it} =  \mu_{c_{it}} + \omega Y_{it-1} + \epsilon_{it}, \ \mbox{for $i=1, \ldots, m$, and $t = 1, \ldots, T$},
\end{align*}
where $|\omega| < 1$ and $\epsilon_{it} \sim N(0, v^2)$. We consider
synthetic datasets with four clusters so that $c_{it} \in \{1,2,3,4\}$
corresponding to $\mu\in\{-2, 0, 2, 4\}$.  The four clusters are formed by
dividing the $m=100$ units into equal groups of 25.  At time points $t=2,
\ldots, T$, four units from each cluster are shifted to other clusters in
a systematic way so that clusters change over time.    An example of the
type of data this procedures creates can be seen in Figure
\ref{synthetic_data1} of the supplementary material.   Data are generated
using the function {\tt arima.sim} in {\tt R} (\citealt{R:2020}) under
$\omega \in \{0, 0.1, 0.25, 0.5, 0.75, 0.9\}$, $v^2 \in \{0.5^2, 1^2\}$,
and $T \in \{5, 10\}$.   A total of 100 datasets are created under each
scenario (totaling 28) and to each we fit the following competing models.

\begin{enumerate}
\item  A weighted dependent Dirichlet process (wddp) described in
    \cite{quintana2020dependent} and chapter 4.4.4 of
    \cite{BNPbook:2015}.  This model incorporates time in the weights
    of a DDP.  As such, we fit this model to a concatenated version of
    the data  ($Y_i, t_i)$ for $i = 1, \ldots, Tm$.  Specific details
    of this procedure are provided in Section \ref{simstudy3_cont} of
    the online supplementary material.
\item A linear dependent Dirichlet process (lddp) described in
    \cite{quintana2020dependent} and chapter 4.4.2 of
    \cite{BNPbook:2015}.  This model incorporates time in the atoms of
    a DDP.  As in the wddp, to this model we concatenated version of
    the data  ($Y_i, t_i)$ for $i = 1, \ldots, Tm$.  Specific details
    of this procedure are also provided in Section
    \ref{simstudy3_cont} of the online supplementary material.
\item A Griffiths-Milne dependent Dirichlet process (gmddp) mixture.
    This model was fit using the {\tt BNPmix} package in {\tt R}
    \citep{BNPmix}.  See \cite{BNPmix_JSS} for specific model details.
\item A temporally independent $CRP(M)$ model (ind\_crp). This
    procedure corresponds to $\alpha = 0$ and is a special case of our approach and that proposed in  \citet{Caron:2007}. The exact details of this
    model are also provided in  Section \ref{simstudy3_cont} of the
    online supplementary material. 
\item A temporally static $CRP(M)$ model (static\_crp). This procedure
    corresponds to $\alpha= 1$ and is also a special case of the model
    detailed in \citet{Caron:2007}. Like the wddp and lddp, this
    procedure is fit to  a concatenated version of the data  ($Y_i$,
     for $i = 1, \ldots, Tm$.  See  Section \ref{simstudy3_cont}
    of the online supplementary material for more details.
\end{enumerate}

Results from this simulation study are presented in Figure
\ref{fig:sim.study3}. The left plot in the figure displays the WAIC model
fit metric for each procedure averaged over the 100 synthetic data sets,
while the right plot displays the ARI value. The ARI values were produced
by calculating ARI for each MCMC sample from the posterior distribution of
$\rho_t$. This was done separately for each $t = 1, \ldots, T$ and then
averaged across time and MCMC samples. From the left plot, our method (crpm)
is superior to all other methods in terms of the model fit metric WAIC,
save for the lddp method. Against the lddp method, the drpm produces
better fits when the data are generated with  high auto-correlation and
larger data noise.   This is to be expected as the drpm only incorporates
temporal information in the prior on partitions while the lddp includes
this information in the atoms (i.e., likelihood). That said, in terms of
partition estimation, the drpm easily outperforms all other competitors in
terms of ARI. The fact that drpm outperformed wddp and lddp in terms of
ARI is not surprising as the later methods incorporate time differently
compared to the drpm.   However, the drpm and gmddp methods both treat
time similarly, but the former from a partition perspective while the
latter from a random probability measure perspective. This highlights the
fact that when partitions are of interest, modeling them directly provides
benefit.

\spacingset{1}
\begin{figure}
\centering
\begin{subfigure}{.5\textwidth}
  \centering
  \includegraphics[scale=0.5]{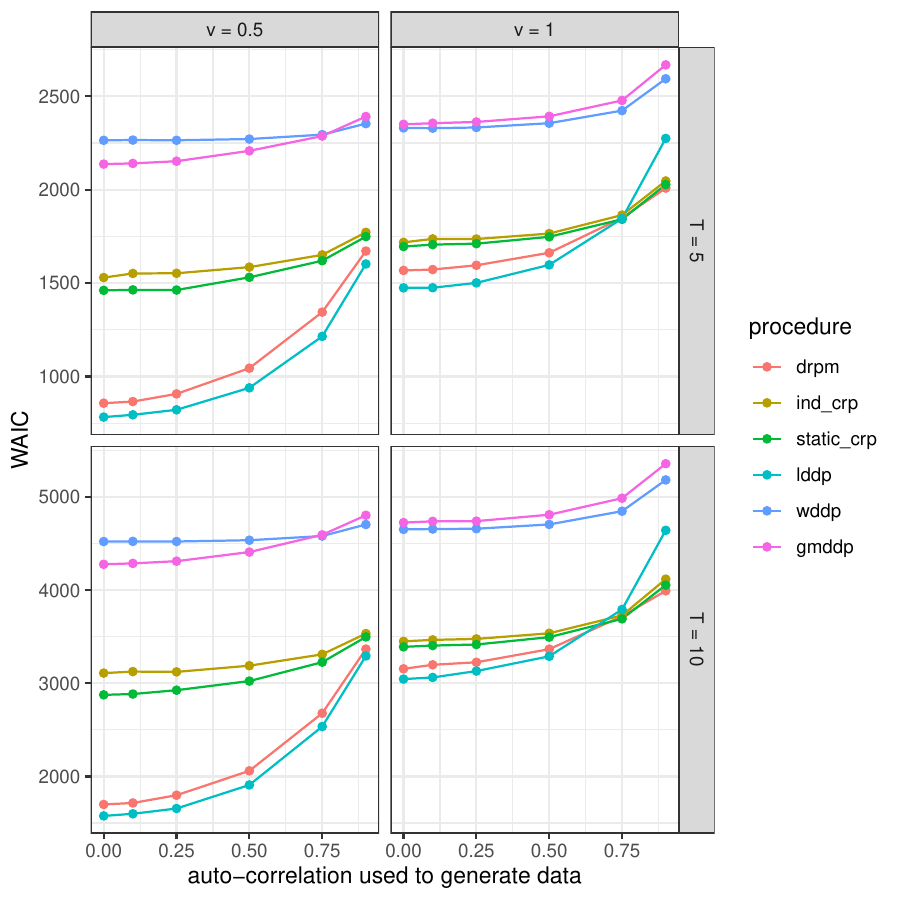}
  \label{fig:sub1}
\end{subfigure}%
\begin{subfigure}{.5\textwidth}
  \centering
  \includegraphics[scale=0.5]{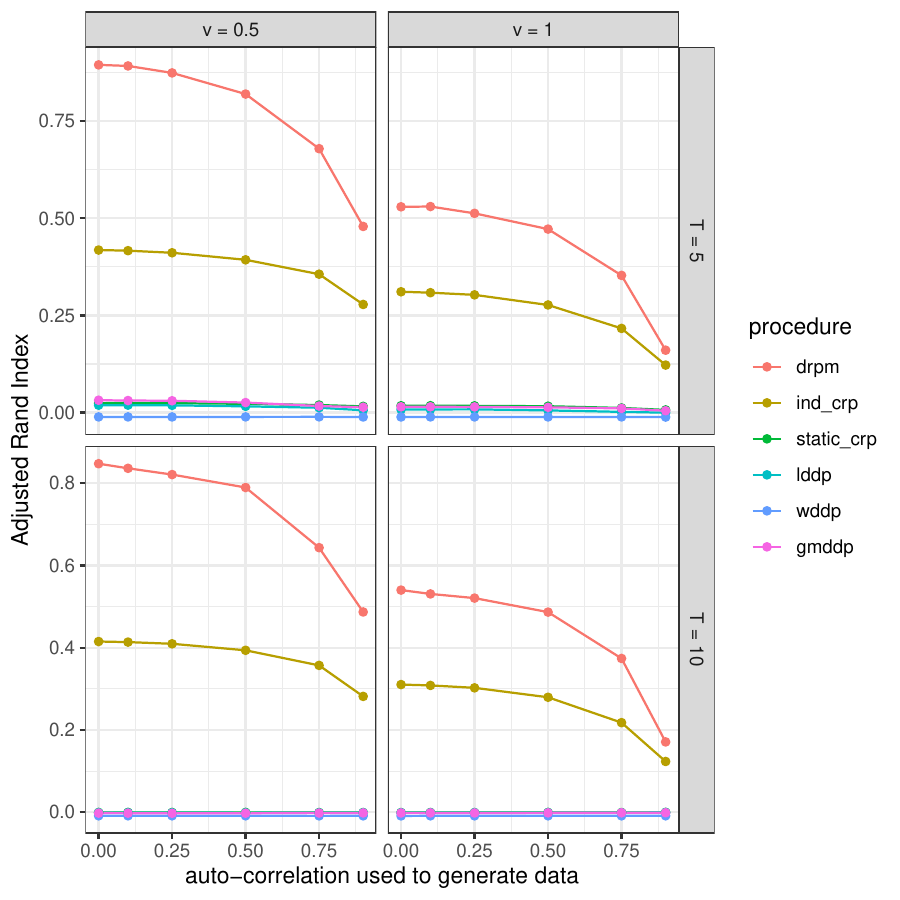}
  \label{fig:sub2}
\end{subfigure}
\caption{Results from simulation study with data that contains AR(1)-type temporal
correlation.  The left plot corresponds to results for the model fit
metric WAIC and the right plot displays results associated with ARI.}
\label{fig:sim.study3}
\end{figure}
\spacingset{1.5}

\section{Application} \label{applications}
In this section we apply our method to a real-world data set coming from
the field of environmental science. A second application in education is
provided in Section \ref{SIMCE.application} of the online supplementary
material. As mentioned previously, once a partition model is specified,
there is quite a bit of flexibility regarding how (or if) temporal
dependence is incorporated in other parts of a hierarchical model.  To
illustrate this, we incorporate temporal dependence in three places of the
hierarchical model we construct.

As part of preliminary exploratory data analysis (not shown), we examined
serial dependence for each experimental unit (monitoring station), and
concluded that they all exhibited a particular type of temporal
dependence. Because of this,  we introduce a unit-specific temporal
dependence parameter $|\eta_{1i}| \le 1$ and model observations from a
single unit over time ($Y_{1i}, \ldots, Y_{iT}$) with an AR(1) structure.
In addition, motivated by a desire for parsimony, we employed a Laplace
prior for  $\eta_{1i}$. Finally, to permit the temporal dependence in the
partition model to propagate through the hierarchical model,  we assume
an AR(1) structure for the $\theta_t$'s.  The full  hierarchical model
is detailed in \eqref{FullModel}.
\begin{equation}\label{FullModel}
\begin{aligned}
Y_{it} \mid   Y_{i t-1}, \bm{\mu}^{\star}_{t},\bm{\sigma}^{2\star}_t, \bm{\eta}, \bm{c}_{t} & \stackrel{ind}{\sim} N(\mu^{\star}_{c_{it}t} + \eta_{1i}Y_{i t-1},\sigma_{c_{it}t}^{2\star}(1-\eta_{1i}^2)), \\
Y_{i1} & \stackrel{ind}{\sim} N(\mu^{\star}_{c_{i1}1}, \sigma_{c_{i1}1}^{2\star}),\\
\xi_i = \mbox{Logit}(0.5(\eta_{1i} + 1)) & \stackrel{iid}{\sim} Laplace(a,b), \\
(\mu_{jt}^{\star}, \sigma^{\star}_{jt}) & \stackrel{ind}{\sim} N(\theta_t, \tau_t^2) \times UN(0,A_{\sigma}),  \\
\theta_t \mid  \theta_{t-1} & \stackrel{ind}{\sim} N((1-\phi_1)\phi_0 + \phi_1\theta_{t-1}, \lambda^2(1-\phi_1^2)), \\
(\theta_1, \tau_t) & \sim N(\phi_0, \lambda^2) \times UN(0,A_{\tau}),\\
(\phi_0, \phi_1, \lambda) & \sim N(m_0, s_0^2) \times UN(-1,1) \times UN(0,A_{\lambda}), \\
\{\bm{c}_{t}, \ldots, \bm{c}_{T}\} & \sim tRPM(\bm{\alpha}, M), \ \mbox{with $\alpha_t \stackrel{iid}{\sim} Beta(a_{\alpha}, b_{\alpha})$},
\end{aligned}
\end{equation}
where all Roman letters correspond to parameters that are user supplied.
There are a number of special cases embedded in our
hierarchical model. For example, $\eta_{i1} = 0$ for all $i$ results in
conditionally independent observations. Further, $\phi_1 = 0$ results in
independent atoms and $\alpha_t = 0$ for all $t$ in independent partitions
over time. The model \eqref{dat.gen.mech} used in the simulation
studies is a special case of \eqref{FullModel},  obtained by setting
$\phi_1 = 0$ and $\eta_{i1} = 0$ for all $i$.

\subsection{Rural Background PM$_{10}$ Data Application} \label{pm10.application}
The rural background PM$_{10}$ data is taken  from the European air
quality database. These data are comprised of the daily measurements of
particulate matter with a diameter less than 10 $\mu$m from rural
background stations in Germany and are publicly available in the {\tt
gstat} package  (\citealt{gstat:2016}) found on CRAN in {\tt R}
(\citealt{R:2020}). We focus on average monthly  PM$_{10}$ measures from
the year 2005 ( i.e., $T=12$). Of the 69 stations, 9 were removed because
of missing values.

We fit the hierarchical model \eqref{FullModel} to these data and consider
all the possible special cases (i.e., $\eta_{1i} = 0$ or not, $\phi_1 = 0$
or not, $\alpha_t = 0$ or not).  This resulted in 8 total models that were
fit by collecting 1,000 MCMC iterates after discarding the first 10,000 as
burn-in and thinning by 40 (i.e., 50,000 total MCMC samples were collected).  Running the algorithm for 50,000 samples on a laptop with 16GB of RAM took between 1 and 2.5 minutes.  We use the following prior values: $A_{\sigma} = 10$, $A_{\tau} = A_{\lambda} = 5$, $m_0 = 0$, $s^2 = 100$, $a = 0$, $b=1$, and $a_{\alpha} = b_{\alpha} =
2$.  The prior for $\alpha_t$ was specified to encourage $\alpha$ from approaching 1.   The WAIC and log pseudo marginal likelihood (LPML) for each model are
presented in Table \ref{PM10}.  To improve the computational stability of the LPML, for each model fit, the MCMC iterates that correspond to 0.5\% of the smallest likelihood values were not included in the calculation of LPML.  

First we note that among all the model fits, employing a variant of
$tRPM(\bm{\alpha}, M)$ (i.e., rows with ``Yes'' in the ``In Partition''
column)  improves model fit. The best performing model in terms
of WAIC includes temporal dependence  in the partition model only, while that for LPML includes temporal dependence in the partition model
and in the likelihood.

Now focusing on partition inference,  we provide Figure \ref{LaggedARI}.
This figure displays the lagged ARI values for each of the 8 models.
Notice that when partitions are modeled independently (first or third rows
of Figure \ref{LaggedARI}) then partitions evolve over time quite
erratically in the sense that the cluster configuration can change
dramatically from one time point to the next. However, when employing
$tRPM(\bm{\alpha}, M)$ (second row of Figure \ref{LaggedARI}) the
partitions seemed to evolve much more ``smoothly'' as there is less
drastic changes in cluster configuration. In fact the model that produces
the best model fit metrics (right most plot of second row) seems to
produce partitions that change quite gently over time as desired.

\spacingset{1}
\begin{table}
\caption{PM$_{10}$ data: Results of model fitting. The bold font
identifies best model fits in terms of LPML and WAIC. Higher values for
LPML indicate better fit while lower values for WAIC indicate better fit.}
\centering
\begin{tabular}{ccc  cc  }
\multicolumn{3}{c}{Temporal Dependence In} &   \\ \cmidrule(r){1-3}
Likelihood ($\eta_{1i}$) & Atoms ($\phi_{1}$) & Partition ($\alpha_t$) 	& LPML & WAIC  \\ \midrule
No & No & No  						& -1814 & 3683  \\
No & No & Yes  					& -1656 & \bf{3031}  \\ \midrule
No & Yes & No  					& -1752 & 3539  \\
No & Yes & Yes  					& -1644 & 3271  \\   \midrule
Yes & No & No  					& -1704 & 3554  \\
Yes & No & Yes  					& \bf{-1578} & 3186  \\  \midrule
Yes & Yes & No  					& -1706 & 3544  \\
Yes & Yes & Yes  					& -1586 & 3153  \\   \bottomrule
\end{tabular}
\label{PM10}
\end{table}%
\spacingset{1.5}


\spacingset{1}
\begin{figure}
\centering
\includegraphics[scale=0.45]{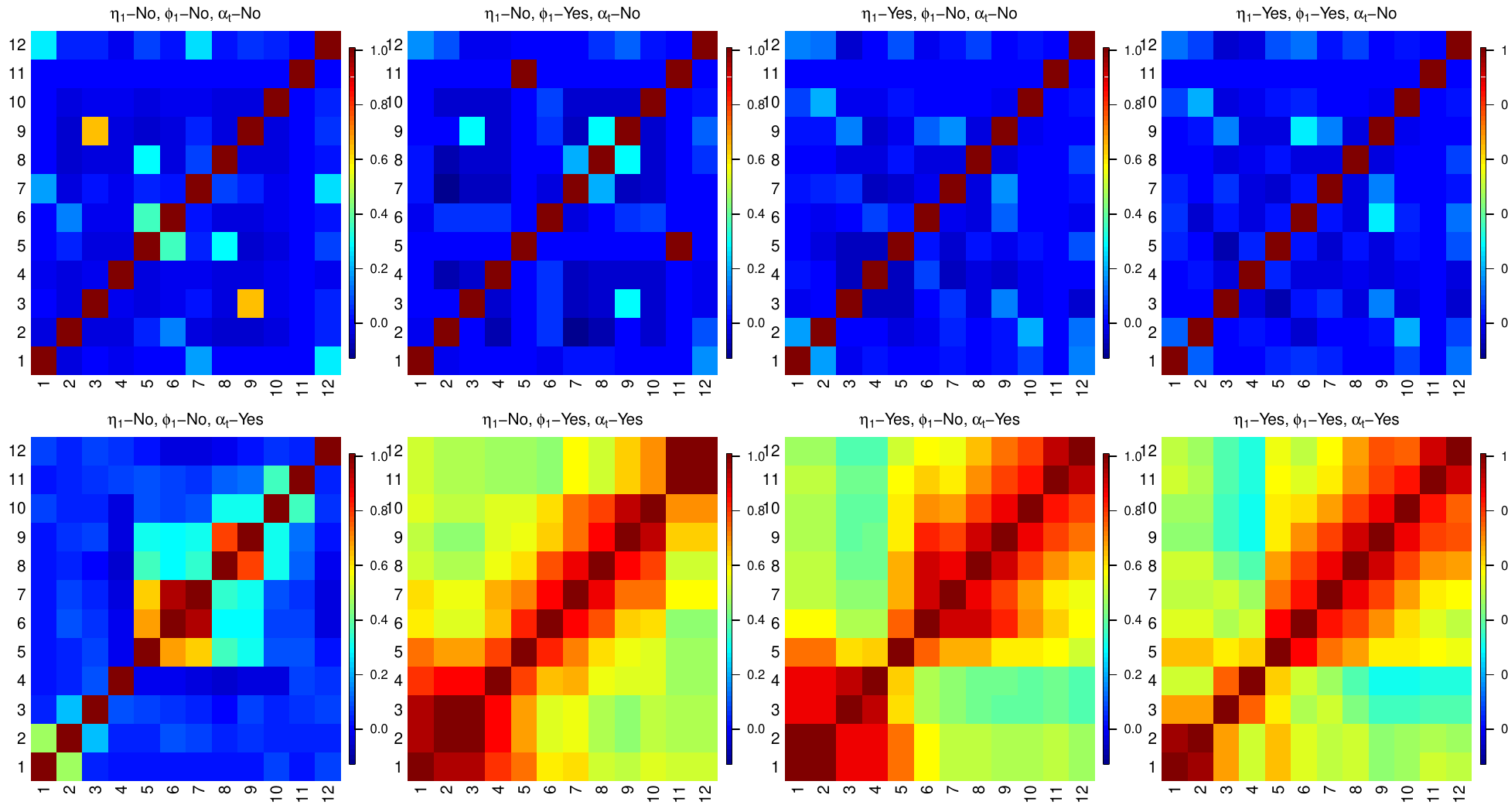}
\caption{PM$_{10}$ data. Each figure is a summary of the lagged $ARI$ values
corresponding to the 8 models in Table \ref{PM10}. At
each time point the partition was estimated using the {\tt salso} function
in the {\tt salso} R package (\citealt{salso:2020}) based on binder loss.}
\label{LaggedARI}
\end{figure}
\spacingset{1.5}

Finally, we provide Figure \ref{Partitions0verTime} as a means to
visualize how estimated partitions based on our joint partition model
evolve over time relative to modeling partitions with an $iid$ model. Each
plot in Figure \ref{Partitions0verTime} displays the estimated partition
at each time point. Each color represents a cluster and each number
corresponds to a specific measuring station. The plots illustrate the
sequential nature of cluster forming with the first cluster always
containing the first measuring station, the next cluster is formed by the
first station not included in the first cluster and so on.  The plots in
the right column correspond to using $tRPM(\bm{\alpha}, M)$ to jointly
model partitions while those on the right employ $\rho_t
\stackrel{iid}{\sim} CRP(M)$. It is evident from Figure
\ref{Partitions0verTime} that from one time point to the next that
partitions based on our construction evolve much more gently over time.
This more closely mimics how PM$_{10}$ measurements would evolve as a
function of time compared to the {\it iid} model.

\spacingset{1}
\begin{figure}
\centering
\includegraphics[scale=0.45]{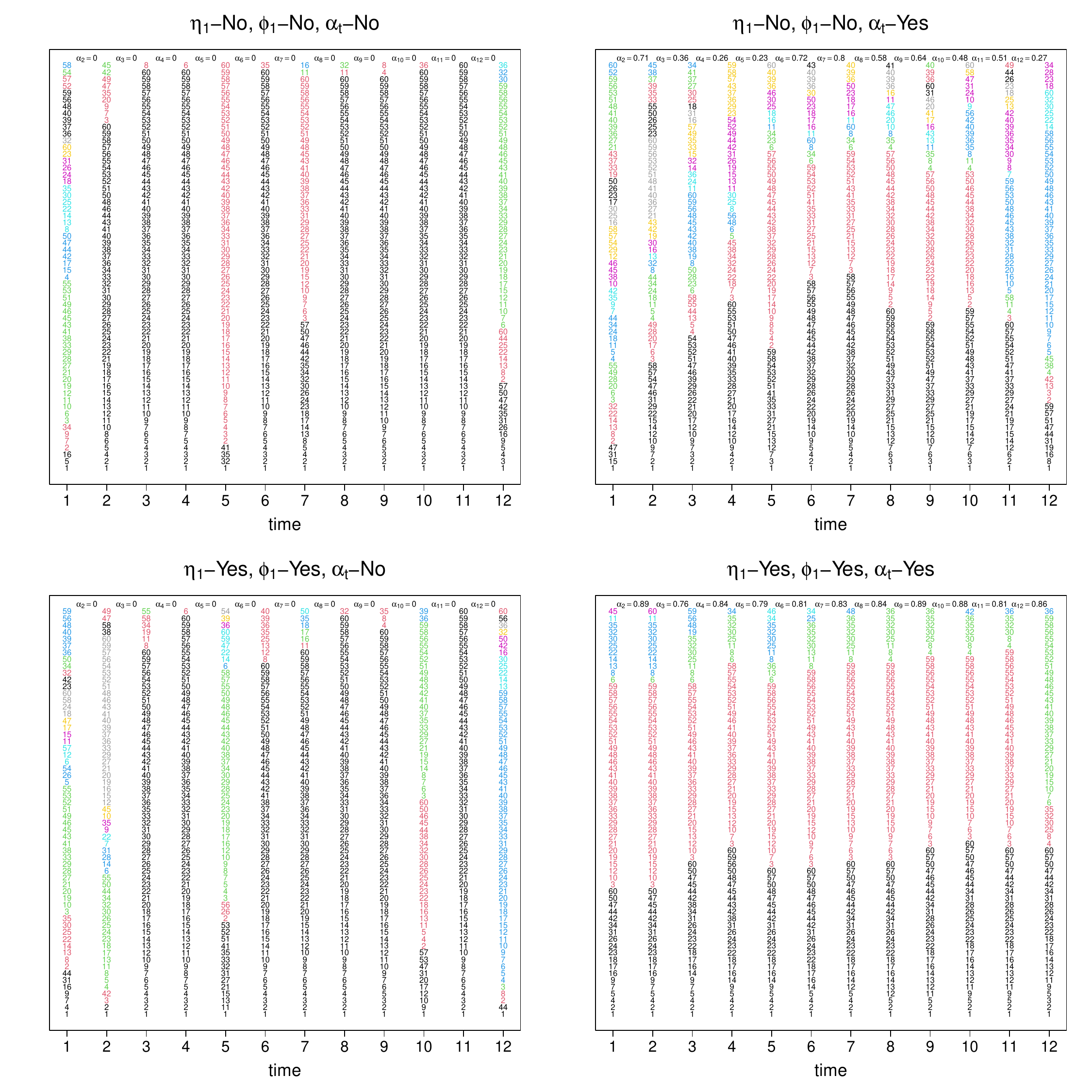}
\caption{Each plot displays the estimated  partition at each time point.  Plots
in the left column are based on $\rho_t \stackrel{iid}{\sim} CRP(M)$ while
those in the right column are based on  $(\rho_1, \ldots, \rho_T) \sim
tRPM(\bm{\alpha}, M)$.  Clusters are highlighted by color and each
number corresponds to a specific monitoring station. }
\label{Partitions0verTime}
\end{figure}
\spacingset{1.5}

\subsection{Extensions to the Joint Partition Model}

Based on our generic joint model construction, it is straightforward to
incorporate other information in the partition model such as covariates. For example, in the
data application of \ref{pm10.application} each monitoring station's
spatial coordinates were recorded. Incorporating spatial dependence in our
analysis of the PM$_{10}$ data can be easily accommodated via the EPPF in
our construction. This would result in spatially informed clusters that
evolve over time. If the spatially referenced EPPF preserves sample size
consistency, then Proposition \ref{proposition1} still holds.  One such
EPPF is part of the spatial product partition model (sPPM) class developed
in \cite{page&quintana:2016}.   To illustrate the ease of incorporating
space in our model construction, here we model the PM$_{10}$ data using
model \eqref{FullModel} but rather than use the $tRPM(\bm{\alpha}, M)$ to
model the sequence of partitions, we use a version of our dependent
partition model that employs the sPPM.

In order to introduce the sPPM, let $\bm{s}_i$ denote the spatial
coordinates of the $i$th item (note that these coordinates do not change
over time) and let $\bm{s}^{\star}_{jt}$ be the subset of spatial
coordinates that belong to the $j$th cluster at time $t$. Then we express
the EPPF of the $t$th  partition with the following product form
\begin{align} \label{sPPM}
\text{Pr}(\rho_t \mid  \nu_0, M) \propto \prod_{j=1}^{k_t} c(S_{jt}\mid M) g(\bm{s}^{\star}_{jt}\mid \nu_0).
\end{align}
Here $c(\cdot\mid M) \ge 0$ is called the cohesion and is a set function
that produces cluster weights {\it a priori}.  We consider the cohesion
$c(S_{jt}\mid M) = M \times (|S_{jt}| - 1)!$ as it has connections with
the CRP making this version of the sPPM a type of spatially re-weighted
CRP. The similarity function $g(\cdot\mid \nu_0)$ is a set function
parametrized by $\nu_0$ that  measures the compactness of the spatial
coordinates in $\bm{s}^{\star}_{jt}$ producing higher values if the
spatial coordinates in $\bm{s}^{\star}_{jt}$ are close to each other. Not
all similarity functions preserve sample size consistency so to ensure
this, after standardizing spatial locations, we employ
\begin{align}\label{similarity.function}
g(\bm{s}^{\star}_{jt}\mid \nu_0) = \int \prod_{i \in S_{jt}} N(\bm{s}_i \mid  \bm{m},\bm{V}) NIW(\bm{m},\bm{V}\mid  \bm{0}, 1, \nu_0, \bm{I})\, d\bm{m}\,d\bm{V},
\end{align}
where $N(\cdot \mid  \bm{m},\bm{V})$ denotes a bivariate normal density
and $NIW(\cdot,\cdot\mid  \bm{0}, 1, \nu_0, \bm{I})$ a
normal-inverse-Wishart density with mean $\bm{0}$, scale equal to 1,
inverse scale matrix equal to $\bm{I}$, and $\nu_0$ being the
user-supplied degrees of freedom.  Note that larger values of $\nu_0$
increase spatial influence on partition probabilities. For more details on
why this formulation preserves sample size consistency, see
\cite{PPMxMullerQuintanaRosner} and \cite{quintana&loschi&page:2018}. For
more information regarding the impact of $\nu_0$ on product form of the
partition model, see \cite{page&quintana:2016, page&quintana:2018}. We
will use $stRPM(\bm{\alpha}, \nu_0, M)$ to denote our spatio-temporal
random partition model \eqref{joint.joint.model} parameterized by
$\alpha_1, \ldots, \alpha_T$ and EPPF detailed in \eqref{sPPM} and
\eqref{similarity.function}.

As in the previous section, we fit  model \eqref{FullModel} to the
PM$_{10}$ data but replacing  $tRPM(\bm{\alpha}, M)$ with the
$stRPM(\bm{\alpha}, \nu_0, M)$.  Also as before, we consider all the
possible special cases of the model (i.e., $\eta_{1i} = 0$ or not, $\phi_1
= 0$ or not, $\alpha_t = 0$ or not).  This resulted in 8 total models that
were fit by collecting 1,000 MCMC iterates after discarding the first
10,000 as burn-in and thinning by 40.  Fitting the eight models based on the $stRPM(\bm{\alpha}, \nu_0, M)$ took between 20 and 57 minutes.  The prior values employed were the
same as before with the addition of setting $\nu_0 = 5$ which places in the partition prior moderate weight on spatial locations.

Incorporating space in the partition model seems to provide benefit in
terms of model fit as the LPML and WAIC values associated with the model
that includes space in the partition model and temporal dependence in all
levels of the model fits best in terms of LPML with -1487 compared to -1586 listed
in Table \ref{PM10} and temporal dependence in partition and atoms fits best with regards to WAIC with 3140 compared to  3271 listed
in Table \ref{PM10}. Additionally, Figure \ref{LaggedARIOnlySpace}
displays the lagged ARI values for the 8 models that include space. As in
Figure \ref{LaggedARI} when there is no temporal dependence in the
partition model, then the estimated partitions exhibit no temporal
dependence. However, when time and space are included then there is clear
dependence between partitions as a function of time. Further, the
dependence between partitions seems to decay faster with space and time
included in the partition model compared to just time.

\spacingset{1}
\begin{figure}
\centering
\includegraphics[scale=0.45]{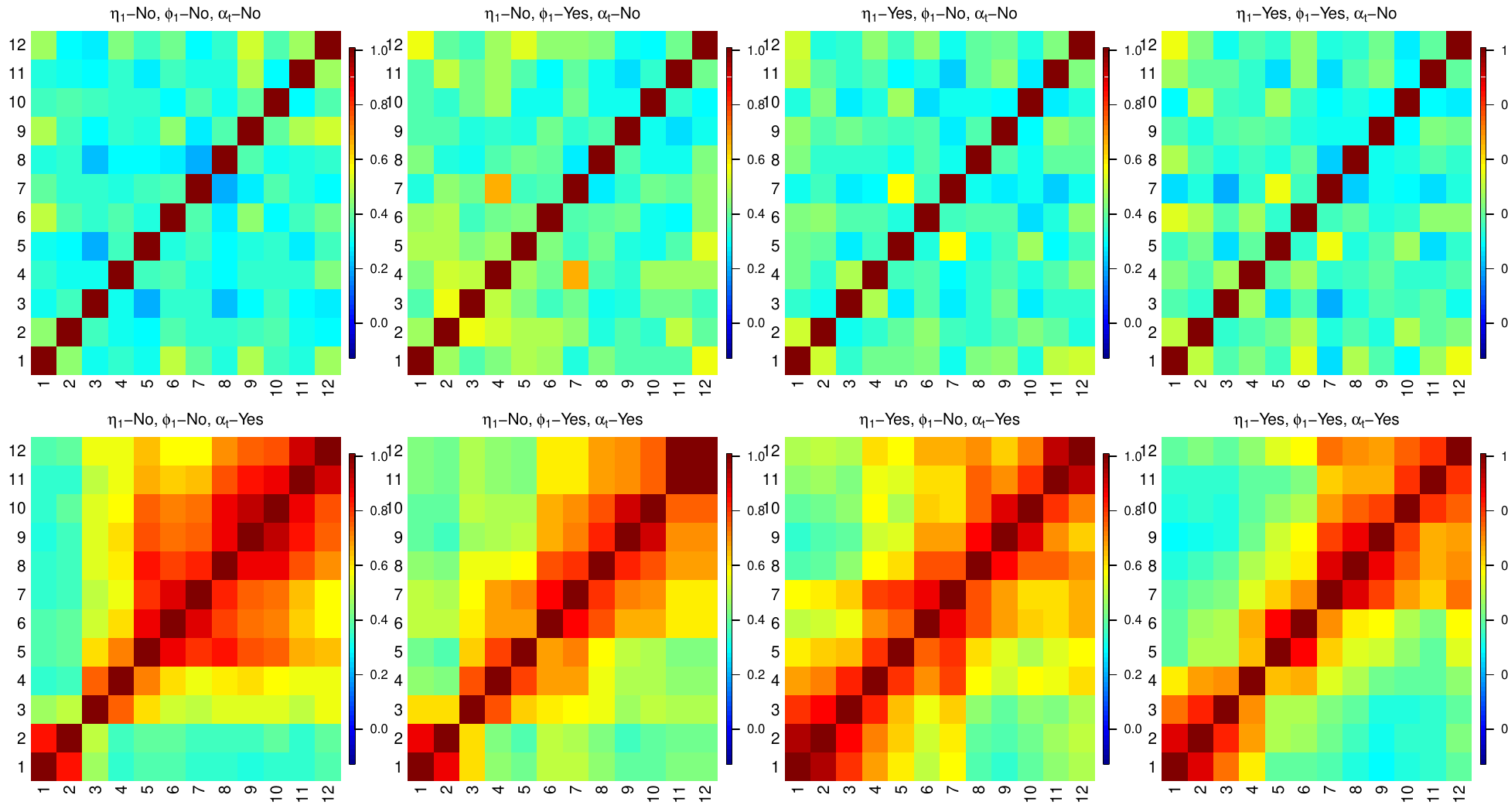}
\caption{PM$_{10}$ data. Each figure is a summary of the lagged $ARI$ values
corresponding to the 8 models that are detailed in Table \ref{PM10} except
that space is also included in the dependent random partition model. At
each time point the partition was estimated using the {\tt salso} function
in the {\tt salso} R package (\citealt{salso:2020}) based on binder loss.}
\label{LaggedARIOnlySpace}
\end{figure}
\spacingset{1.5}

Finally, we provide Figure \ref{ClusterEstimates} which displays the
estimated spatially referenced partitions at each time point based on the
model that achieved the best fit (space in the partition model  and
temporal dependence in all levels of the model).  The size of each point
in the figure is proportional to the PM$_{10}$ measured at the particular
station and each depicts a cluster.  To make connections with Figure
\ref{Partitions0verTime} each monitoring station is labeled with the same
number as before.  Notice that there are clear similarities from one time
point to the next for most months. That said, there are two time periods
for which changes in the PM$_{10}$ are more drastic relative to the
previous time period (e.g., August to September). In these months the
estimated $\alpha_t$ is quite a bit smaller and as a result, the estimated
partitions are more different.

\spacingset{1}
\begin{figure}[htp]
\centering
\includegraphics[scale=0.6]{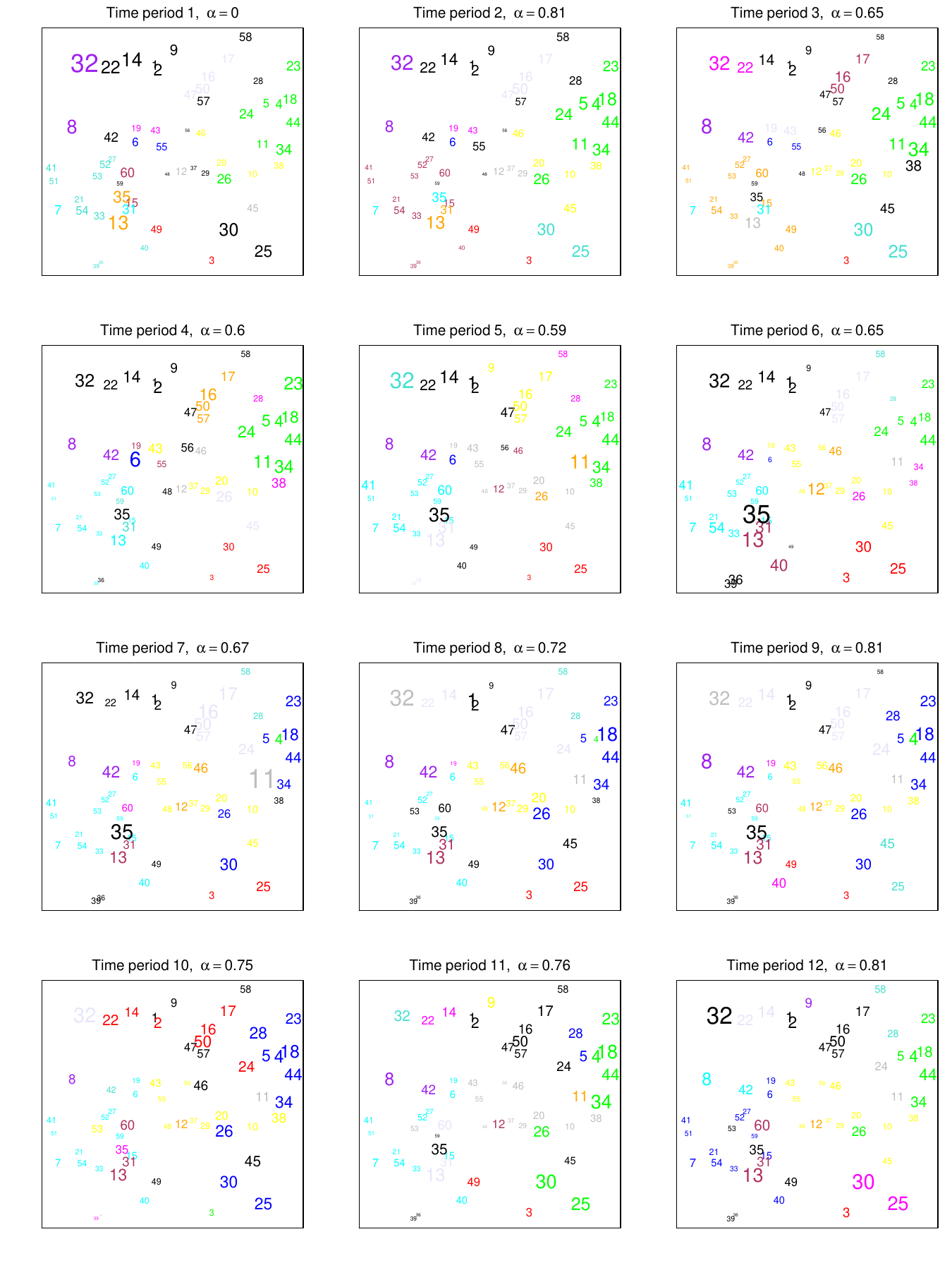}
\caption{PM$_{10}$ data. Graphical display for spatially-referenced estimated
partitions for each time point based on the model that produced best fit
(space in the partition model and temporal dependence in all levels of
the model). The size of a point is proportional to the PM$_{10}$ measured at
that station. Clusters are identified by color. Each monitoring station
is labeled by the same numbers as in Figure \ref{Partitions0verTime}.  At
each time point the partition was estimated using the {\tt salso} function
in the {\tt salso} R package (\citealt{salso:2020}) based on binder loss. }
\label{ClusterEstimates}
\end{figure}
\spacingset{1.5}

\section{Conclusions} \label{conclusions}

We developed a joint probability model for a sequence of partitions that
explicitly considers temporal dependence among the partitions.  Further,
we showed that our methodology is capable of accommodating partitions that
evolve slowly  over time in that the adjusted Rand index between estimated
partitions decays as the lag in time increases. We also showed that if
partitions are indeed independent over time, then employing our joint
partition prior regardless results in a minimal cost in terms of model
fit.

The predictive nature of the temporal prior on a sequence of random
partitions we have presented has a first-order Markovian structure.
Various extensions can be considered, such as adding higher order
dependence across time or dependence in baseline covariates. All of these
cases would build on our constructive definition, as extra refinements of
the basic idea of carrying smooth transitions on time (or time and space).
Lastly, the Markovian structure could be exploited to carry out predictive
inference as well.

\bigskip
\begin{center}
{\large\bf SUPPLEMENTARY MATERIAL}
\end{center}

\begin{description}

\item[Supplementary Material:] Online supplementary material file that
    contains proofs of
    propositions, computation details, additional simulation, and application results.

\item[R-package:] The R-package {\tt drpm} contains the function {\tt
    drpm} that fits all models described in the article.

\end{description}

\bibliographystyle{jasa3}

\bibliography{reference}

\begin{thebibliography}{29}
\newcommand{\enquote}[1]{``#1''}
\expandafter\ifx\csname natexlab\endcsname\relax\def\natexlab#1{#1}\fi
\expandafter\ifx\csname url\endcsname\relax
  \def\url#1{{\tt #1}}\fi
\expandafter\ifx\csname urlprefix\endcsname\relax\def\urlprefix{URL }\fi

\bibitem[\protect\citeauthoryear{Antoniano-Villalobos and
  Walker}{Antoniano-Villalobos and Walker}{2016}]{isadora}
Antoniano-Villalobos, I. and Walker, S.~G. (2016), \enquote{A nonparametric
  model for stationary time series,} {\em J. Time Series Anal.\/}, 37,
  126--142.

\bibitem[\protect\citeauthoryear{Binder}{Binder}{1978}]{binder:1978}
Binder, D.~A. (1978), \enquote{Bayesian Cluster Analysis,} {\em Biometrika\/},
  65, 31--38.

\bibitem[\protect\citeauthoryear{Caron, Davy, and Doucet}{Caron
  et~al.}{2007}]{Caron:2007}
Caron, F., Davy, M., and Doucet, A. (2007), \enquote{Generalized Polya Urn for
  Time-varying Dirichlet Process Mixtures,} in {\em Proceedings of the
  Twenty-Third Conference on Uncertainty in Artificial Intelligence\/}, UAI'07,
  Arlington, Virginia, United States: AUAI Press,
  \urlprefix\url{http://dl.acm.org/citation.cfm?id=3020488.3020493}.

\bibitem[\protect\citeauthoryear{Caron, Neiswanger, Wood, Doucet, and
  Davy}{Caron et~al.}{2017}]{caron:2017}
Caron, F., Neiswanger, W., Wood, F., Doucet, A., and Davy, M. (2017),
  \enquote{Generalized P{{\'o}}lya Urn for Time-Varying Pitman-Yor Processes,}
  {\em Journal of Machine Learning Research\/}, 18, 1--32,
  \urlprefix\url{http://jmlr.org/papers/v18/10-231.html}.

\bibitem[\protect\citeauthoryear{Corradin, Canale, and Nipoti}{Corradin
  et~al.}{2020}]{BNPmix}
Corradin, R., Canale, A., and Nipoti, B. (2020), {\em BNPmix: Bayesian
  Nonparametric Mixture Models\/},
  \urlprefix\url{https://CRAN.R-project.org/package=BNPmix}. R package version
  0.2.6.

\bibitem[\protect\citeauthoryear{Corradin, Canale, and Nipoti}{Corradin
  et~al.}{2021}]{BNPmix_JSS}
--- (2021), \enquote{BNPmix: an R package for Bayesian nonparametric modelling
  via Pitman-Yor mixtures,} {\em Journal of Statistical Software\/}, to appear.

\bibitem[\protect\citeauthoryear{Dahl, Johnson, and M\"{u}ller}{Dahl
  et~al.}{2020}]{salso:2020}
Dahl, D.~B., Johnson, D.~J., and M\"{u}ller, P. (2020), {\em salso: Search
  Algorithms and Loss Functions for Bayesian Clustering\/},
  \urlprefix\url{https://CRAN.R-project.org/package=salso}. R package version
  0.2.5.

\bibitem[\protect\citeauthoryear{{De Blasi}, Favaro, Lijoi, Mena, Pr\"{u}nster,
  and Ruggiero}{{De Blasi}
  et~al.}{2015}]{deblasi&favaro&lijoi&mena&prunster&ruggiero:2015}
{De Blasi}, P., Favaro, S., Lijoi, A., Mena, R.~H., Pr\"{u}nster, I., and
  Ruggiero, M. (2015), \enquote{Are Gibbs-Type Priors the Most Natural
  Generalization of the Dirichlet Process?} {\em IEEE Transactions on Pattern
  Analysis and Machine Intelligence\/}, 37, 212--229.

\bibitem[\protect\citeauthoryear{{De Iorio}, Favaro, Guglielmi, and Ye}{{De
  Iorio} et~al.}{2019}]{deiorio2019bayesian}
{De Iorio}, M., Favaro, S., Guglielmi, A., and Ye, L. (2019), \enquote{Bayesian
  nonparametric temporal dynamic clustering via autoregressive Dirichlet
  priors,} ArXiv:1910.10443.

\bibitem[\protect\citeauthoryear{DeYoreo and Kottas}{DeYoreo and
  Kottas}{2018}]{DeYoreo&Kottas:18}
DeYoreo, M. and Kottas, A. (2018), \enquote{Modeling for dynamic ordinal
  regression relationships: an application to estimating maturity of rockfish
  in {C}alifornia,} {\em Journal of the American Statistical Association\/},
  113, 68--80, \urlprefix\url{https://doi.org/10.1080/01621459.2017.1328357}.

\bibitem[\protect\citeauthoryear{Gelman, Hwang, and Vehtari}{Gelman
  et~al.}{2014}]{gelman&hwang&vehtari:2014}
Gelman, A., Hwang, J., and Vehtari, A. (2014), \enquote{Understanding
  predictive information criteria for Bayesian models,} {\em Statistics and
  Computing\/}, 24, 997--1016.

\bibitem[\protect\citeauthoryear{Gr\"{a}ler, Pebesma, and Heuvelink}{Gr\"{a}ler
  et~al.}{2016}]{gstat:2016}
Gr\"{a}ler, B., Pebesma, E., and Heuvelink, G. (2016), \enquote{Spatio-Temporal
  Interpolation using gstat,} {\em The R Journal\/}, 8, 204--218,
  \urlprefix\url{https://journal.r-project.org/archive/2016-1/na-pebesma-heuvelink.pdf}.

\bibitem[\protect\citeauthoryear{Guti\'errez, Mena, and Ruggiero}{Guti\'errez
  et~al.}{2016}]{gutierrez:2016}
Guti\'errez, L., Mena, R.~H., and Ruggiero, M. (2016), \enquote{A time
  dependent Bayesian nonparametric model for air quality analysis,} {\em
  Computational Statistics \& Data Analysis\/}, 95, 161 -- 175.

\bibitem[\protect\citeauthoryear{Hubert and Arabie}{Hubert and
  Arabie}{1985}]{hubert&arabie:1985}
Hubert, L. and Arabie, P. (1985), \enquote{Comparing Partitions,} {\em Journal
  of Classification\/}, 2, 193--218.

\bibitem[\protect\citeauthoryear{Jo, Lee, M\"{u}ller, Quintana, and Trippa}{Jo
  et~al.}{2017}]{jo2017}
Jo, S., Lee, J., M\"{u}ller, P., Quintana, F.~A., and Trippa, L. (2017),
  \enquote{Dependent Species Sampling Models for Spatial Density Estimation,}
  {\em Bayesian Analysis\/}, 12, 379--406,
  \urlprefix\url{https://doi.org/10.1214/16-BA1006}.

\bibitem[\protect\citeauthoryear{Kalli and Griffin}{Kalli and
  Griffin}{2018}]{kalli&griffin}
Kalli, M. and Griffin, J.~E. (2018), \enquote{Bayesian nonparametric vector
  autoregressive models,} {\em Journal of Econometrics\/}, 203, 267--282.

\bibitem[\protect\citeauthoryear{M\"{u}ller, Quintana, and Rosner}{M\"{u}ller
  et~al.}{2011}]{PPMxMullerQuintanaRosner}
M\"{u}ller, P., Quintana, F., and Rosner, G.~L. (2011), \enquote{A Product
  Partition Model With Regression on Covariates,} {\em Journal of Computational
  and Graphical Statistics\/}, 20, 260--277.

\bibitem[\protect\citeauthoryear{M\"{u}ller, Quintana, Jara, and
  Hanson}{M\"{u}ller et~al.}{2015}]{BNPbook:2015}
M\"{u}ller, P., Quintana, F.~A., Jara, A., and Hanson, T. (Editors) (2015),
  {\em Bayesian Nonparametric Data Analysis\/}, Switzerland: Springer
  International Publishing, 1 edition.

\bibitem[\protect\citeauthoryear{Neal}{Neal}{2000}]{neal:2000}
Neal, R.~M. (2000), \enquote{Markov Chain Sampling Methods for Dirichlet
  Process Mixture Models,} {\em Journal of Computational and Graphical
  Statistics\/}, 9, 249--265.

\bibitem[\protect\citeauthoryear{Nieto-Barajas, M\"{u}ller, Ji, Lu, and
  Mills}{Nieto-Barajas et~al.}{2012}]{nieto-barajas:2012}
Nieto-Barajas, L.~E., M\"{u}ller, P., Ji, Y., Lu, Y., and Mills, G.~B. (2012),
  \enquote{A Time-Series DDP for Functional Proteomics Profiles,} {\em
  Biometrics\/}, 68, 859--868.

\bibitem[\protect\citeauthoryear{Page and Quintana}{Page and
  Quintana}{2016}]{page&quintana:2016}
Page, G.~L. and Quintana, F.~A. (2016), \enquote{Spatial Product Partition
  Models,} {\em Bayesian Analysis\/}, 11, 265--298.

\bibitem[\protect\citeauthoryear{Page and Quintana}{Page and
  Quintana}{2018}]{page&quintana:2018}
--- (2018), \enquote{Calibrating Covariate Informed Product Partition Models,}
  {\em Statistics and Computing\/}, 28, 1009--1031,
  \urlprefix\url{https://doi.org/10.1007/s11222-017-9777-z}.

\bibitem[\protect\citeauthoryear{Quintana, Loschi, and Page}{Quintana
  et~al.}{2018}]{quintana&loschi&page:2018}
Quintana, F.~A., Loschi, R.~H., and Page, G.~L. (2018), {\em Bayesian Product
  Partition Models\/}, Wiley StatsRef: Statistics Reference Online, 1--15,
  \urlprefix\url{https://onlinelibrary.wiley.com/doi/abs/10.1002/9781118445112.stat08123}.

\bibitem[\protect\citeauthoryear{Quintana, M\"{u}ller, Jara, and
  MacEachern}{Quintana et~al.}{2020}]{quintana2020dependent}
Quintana, F.~A., M\"{u}ller, P., Jara, A., and MacEachern, S.~N. (2020),
  \enquote{The Dependent Dirichlet Process and Related Models,}
  ArXiv:2007.06129v1.

\bibitem[\protect\citeauthoryear{{R Core Team}}{{R Core Team}}{2020}]{R:2020}
{R Core Team} (2020), {\em R: A Language and Environment for Statistical
  Computing\/}, R Foundation for Statistical Computing, Vienna, Austria,
  \urlprefix\url{https://www.R-project.org/}.

\bibitem[\protect\citeauthoryear{Rand}{Rand}{1971}]{rand:1971}
Rand, W.~M. (1971), \enquote{Objective Criteria for the Evaluation of
  Clustering Methods,} {\em Journal of the American Statistical Association\/},
  66, 846--850.

\bibitem[\protect\citeauthoryear{Sethuraman}{Sethuraman}{1994}]{Sethuraman:1994}
Sethuraman, J. (1994), \enquote{A constructive definition of Dirichlet priors,}
  {\em Statistica Sinica\/}, 4, 639--650.

\bibitem[\protect\citeauthoryear{Wade, Walker, and Petrone}{Wade
  et~al.}{2014}]{wade&walker&petrone:2014}
Wade, S., Walker, S.~G., and Petrone, S. (2014), \enquote{A Predictive Study of
  Dirichlet Process Mixture Models for Curve Fitting,} {\em Scandinavian
  Journal of Statistics\/}, 41, 580--605.

\bibitem[\protect\citeauthoryear{Zanini, M{\"u}ller, Ji, and Quintana}{Zanini
  et~al.}{2019}]{carlosetal}
Zanini, C. T.~P., M{\"u}ller, P., Ji, Y., and Quintana, F.~A. (2019),
  \enquote{A Bayesian Random Partition Model for Sequential Refinement and
  Coagulation,} {\em Biometrics\/}, 75, 988--999.

\end{thebibliography}


\begin{thebibliography}{xx}

\harvarditem[Caron et~al.]{Caron, Davy \harvardand\ Doucet}{2007}{Caron:2007}
Caron, F., Davy, M. \harvardand\ Doucet, A.  \harvardyearleft
  2007\harvardyearright , Generalized polya urn for time-varying dirichlet
  process mixtures, {\em in} `Proceedings of the Twenty-Third Conference on
  Uncertainty in Artificial Intelligence', UAI'07, AUAI Press, Arlington,
  Virginia, United States, pp.~33--40.
\newline\harvardurl{http://dl.acm.org/citation.cfm?id=3020488.3020493}

\harvarditem{Cepeda-Cuervo \harvardand\
  {N\'u\~nez-Ant\'on}}{2013}{cuervo&anton:2013}
Cepeda-Cuervo, E. \harvardand\ {N\'u\~nez-Ant\'on}, V.  \harvardyearleft
  2013\harvardyearright , `Spatial double generalized beta regression models:
  Extensions and application to study quality of education in colombia', {\em
  Journal of Educational and Behavioral Statistics} {\bf 38},~604--628.

\harvarditem[{De Iorio} et~al.]{{De Iorio}, Johnson, M{\"u}ller \harvardand\
  Rosner}{2009}{deIorio:2009}
{De Iorio}, M., Johnson, W., M{\"u}ller, P. \harvardand\ Rosner, G.
  \harvardyearleft 2009\harvardyearright , `Bayesian nonparametric
  nonproportional hazards survival modeling', {\em Biometrics} {\bf
  65}(3),~762--771.

\harvarditem[Fotheringham et~al.]{Fotheringham, Charlton \harvardand\
  Brunsdon}{2001}{fotheringmah&chalrton&brundson:2001}
Fotheringham, A.~S., Charlton, M.~E. \harvardand\ Brunsdon, C.
  \harvardyearleft 2001\harvardyearright , `Spatial variations in school
  performance: a local analysis using geographically weighted regression', {\em
  Geographical \& Environmental Modelling} {\bf 5},~43--66.

\harvarditem{MacEachern}{2000}{MacEachern:2000}
MacEachern, S.~N.  \harvardyearleft 2000\harvardyearright , Dependent dirichlet
  processes, Technical report, Ohio State University.

\harvarditem[M\"{u}ller et~al.]{M\"{u}ller, Quintana, Jara \harvardand\
  Hanson}{2015}{BNPbook:2015}
M\"{u}ller, P., Quintana, F.~A., Jara, A. \harvardand\ Hanson, T., eds
  \harvardyearleft 2015\harvardyearright , {\em Bayesian Nonparametric Data
  Analysis}, 1 edn, Springer International Publishing, Switzerland.

\harvarditem[M\"{u}ller et~al.]{M\"{u}ller, Quintana \harvardand\
  Rosner}{2011}{PPMxMullerQuintanaRosner}
M\"{u}ller, P., Quintana, F. \harvardand\ Rosner, G.~L.  \harvardyearleft
  2011\harvardyearright , `A product partition model with regression on
  covariates', {\em Journal of Computational and Graphical Statistics} {\bf
  20}(1),~260--277.

\harvarditem{Neal}{2000}{neal:2000}
Neal, R.~M.  \harvardyearleft 2000\harvardyearright , `Markov chain sampling
  methods for dirichlet process mixture models', {\em Journal of Computational
  and Graphical Statistics} {\bf 9},~249--265.

\harvarditem{Padoan \harvardand\ Bevilacqua}{2015}{padoan&bevilacqua:2015}
Padoan, S.~A. \harvardand\ Bevilacqua, M.  \harvardyearleft
  2015\harvardyearright , `Analysis of random fields using {CompRandFld}', {\em
  Journal of Statistical Software} {\bf 63}(9),~1--27.
\newline\harvardurl{http://www.jstatsoft.org/v63/i09/}

\harvarditem{Page \harvardand\ Quintana}{2016}{page&quintana:2016}
Page, G.~L. \harvardand\ Quintana, F.~A.  \harvardyearleft
  2016\harvardyearright , `Spatial product partition models', {\em Bayesian
  Analysis} {\bf 11},~265--298.

\harvarditem[Quintana et~al.]{Quintana, M\"{u}ller, Jara \harvardand\
  MacEachern}{2020}{quintana2020dependent}
Quintana, F.~A., M\"{u}ller, P., Jara, A. \harvardand\ MacEachern, S.~N.
  \harvardyearleft 2020\harvardyearright , `The dependent dirichlet process and
  related models'.
\newblock arXiv:2007.06129v1.

\harvarditem{{R Core Team}}{2020}{R:2020}
{R Core Team}  \harvardyearleft 2020\harvardyearright , {\em R: A Language and
  Environment for Statistical Computing}, R Foundation for Statistical
  Computing, Vienna, Austria.
\newline\harvardurl{https://www.R-project.org/}

\end{thebibliography}
\end{document}



\def\spacingset#1{\renewcommand{\baselinestretch}%
{#1}\small\normalsize} \spacingset{1}


\if1\blind
{
  \title{\bf Supplementary Material:  \\ Dependent Modeling of Temporal Sequences of Random Partitions}
  \author{Garritt L. Page\\
    Brigham Young University, Provo, USA\\
    BCAM - Basque Center of Applied Mathematics, Bilbao, Spain\\
    and \\
    Fernando A. Quintana \thanks{ Partially supported by grant FONDECYT 1180034 and by Iniciativa Cient\'{\i}fica Milenio - Minecon N\'ucleo Milenio MiDaS}\hspace{.2cm}\\
    Pontificia Universidad Cat\'{o}lica de Chile, Santiago, Chile \\
    and \\
    David B. Dahl \\
    Brigham Young University, Provo, USA.}
  \maketitle
} \fi

\if0\blind
{
  \bigskip
  \bigskip
  \bigskip
  \begin{center}
    {\LARGE\bf Dependent Sequence of  Random Partition Models}
\end{center}
  \medskip
} \fi

\doublespacing

This document contains supplementary material to the paper ``Dependent Modeling of Temporal Sequences of Random Partitions''.

\section{Proofs of Propositions}

In this section of the supplementary material we provide proofs of the two
propositions described in the main article

\subsection{Proof of Proposition 1}
\begin{proof}
For clarity, here we introduce notation that highlights the dependence of
partitions on sample size.  For example, $\rho_{t,m} =
(S_{1,t},\ldots,S_{k_t(m), t })$ and $[m]=\{1,\ldots,m\}$. By assumption
$\text{Pr}(\rho_{1,m})$ is specified by means of an EPPF which we now
construct. Denote $\mathbb{N}^*=\cup_{k=0}^{\infty}\mathbb{N}^k$, and
identify any $\bm{n}=(n_1,\ldots,n_k)\in\mathbb{N}^*$ with  the infinite
sequence $(n_1,\ldots,n_k,0,0,\ldots)$. Given $\bm{n}\in\mathbb{N}^*$, let
$k(\bm{n})$ denote the number of non-zero entries in $\bm{n}$ and denote
by $\bm{n}^{j+}$ the result of incrementing $\bm{n}$'s $j$th component
(i.e., $n_j$)  by 1, with $1\le j\le k(\bm{n})+1$. An EPPF is then any
function $r:\mathbb{N}^* \longrightarrow [0,1]$ that is symmetric in its
arguments and where
\begin{equation}\label{eq:EPPF}
r(1)=1\qquad\mbox{and}\qquad r(\bm{n})=\sum_{j=1}^{k(\bm{n})+1}r(\bm{n}^{j+})
\qquad \mbox{for all $\bm{n}\in\mathbb{N}^*$.}
\end{equation}
Condition \eqref{eq:EPPF} implies that a EPPF is sample size consistent,
i.e., marginalizing the $(n+1)$st element leads to the model for $n$
elements. The EPPF also implies exchangeability of configurations in the
sense that a EPPF is invariant under permutations of the elements that
keep the cluster sizes unaltered. We also note that any valid EPPF defines
a predictive rule of the form
\begin{equation}\label{eq:PPF}
r_j(\bm{n})=\frac{r(\bm{n}^{j+})}{r(\bm{n})},\qquad\mbox{for $1\le j\le k(\bm{n})+1$,}
\end{equation}
where it is assumed that $r(\bm{n})>0$ and $r_j(\bm{n})$  represents the
probability of a new element joining the $j$th already existing cluster,
for $1\le j\le k(\bm{n})$, or starting a new one (the $k(\bm{n})+1$). The
one-step rule \eqref{eq:PPF} can also be extended to predictions of two or
more elements by simply iterating the one-step rule as many times as
needed.  Now, given an EPPF $r$, we have that
\begin{equation}\label{eq:priorrho1}
\text{Pr}(\rho_{1,m}=(S_{1,1},\ldots,S_{k_1(m),1}))=r(n_{1,1},\ldots,n_{k_1(m),1}).
\end{equation}

To prove the result, it suffices to show that it holds for $\rho_{2,m}$
and then by induction the result holds generally. Denote by
$[\Gamma]=\{i\in\{1,\ldots,m\}: \,\gamma_{i2}=0\}$ the (random) set of
elements removed from $\rho_{1,m}$. Then, $\rho_{1,m}^{-N_{02}}$ is a
partition of the elements of $[m]-[\Gamma]$ (where as before $N_{02} =
\sum_{j = 1}^m I[\gamma_{j2} = 0]$). By exchangeability and the fact that
an EPPF is sample size consistent, we have that for any partition
$S^-_1,\ldots,S^-_{k([m]-[\Gamma])}$ of $[m]-[\Gamma]$:
\begin{align*}
\text{Pr}(\rho_{2,m}^{-N_{02}}=(S^-_1,\ldots,S^-_{k([m]-[\Gamma])})\mid [\Gamma]) & =  \text{Pr}(\rho_{1,m}^{-N_{02}}=(S^-_1,\ldots,S^-_{k([m]-[\Gamma])})\mid [\Gamma])\\
& =r(|S^-_1|,\ldots,|S^-_{k([m]-[\Gamma])}|),
\end{align*}
where $|S_j|$ is the number of elements in $S_j$. In addition, and again
by exchangeability and sample size consistency, the predictive rule
starting from $[m]-[\Gamma]$ (or from any subset of $[m]$ for that matter)
depends only on the sizes of the subsets in that partition. Thus,
conditioning on all reallocation configurations and initial partition
after subject removal we have:
\begin{align*}
\begin{split}
\text{Pr}(\rho_{2,m}=(S_1,\ldots,S_k))&= \sum_{[\Gamma]} \sum_{\rho_{2,m}^{-N_{02}}}   \text{Pr}(\rho_{2,m}=(S_1,\ldots,S_k)\mid[\Gamma],\,\rho_{2,m}^{-N_{02}}) \times \\
  & \qquad \qquad \qquad \text{Pr}(\rho_{2,m}^{-N_{02}}\mid [\Gamma])\text{Pr}([\Gamma]),
\end{split} \\
\begin{split}
& =\sum_{[\Gamma]} \sum_{\rho_{1,m}^{-N_{02}}}  \text{Pr}(\rho_{1,m}=(S_1,\ldots,S_k)\mid[\Gamma],\,\rho_{1,m}^{-N_{02}})  \times \\
  & \qquad \qquad \qquad \text{Pr}(\rho_{1,m}^{-N_{02}}\mid [\Gamma])  \text{Pr}([\Gamma]),
\end{split} \\
& =\text{Pr}(\rho_{1,m}=(S_1,\ldots,S_k)),
\end{align*}
where the second to last equality follows from the constructive
description given earlier and the properties of the EPPF. The result then
follows.
\end{proof}

\subsection{Proof of Proposition 2}
\begin{proof}
The proof proceeds by direct calculations.
\begin{itemize}
\item[(a)] Let $\gamma_i=1$ if unit $i\in[m]$ is not relocated. Note that $\gamma_1,\gamma_2\stackrel{iid}{\sim}
Ber(\alpha)$. By definition, $P(c_{11}=c_{21})=\frac{1}{M+1}$ and $P(c_{11}\ne c_{21})=\frac{M}{M+1}$.
By conditioning on $\gamma_1,\gamma_2$ and $c_{11},c_{21}$ we get
$$P\left(c_{12}=c_{22} \mid c_{11}=c_{21},\left(\gamma_{1}, \gamma_{2}\right)\right)=\left\{
\begin{array}{cl}
1 & \mbox{if $\left(\gamma_{1}, \gamma_{2}\right)=(1,1)$} \\
\frac{1}{M+1} & \mbox{otherwise.} \\ \end{array}\right.$$
It then follows that
\begin{eqnarray}
P\left(c_{12}=c_{22}, c_{11}=c_{21}\right)&=&\sum_{\gamma_1,\gamma_2}
P\left(c_{12}=c_{22} \mid c_{11}=c_{21},\left(\gamma_{1}, \gamma_{2}\right)\right)
P\left(c_{11}=c_{21}\right)P(\gamma_{1})P(\gamma_{2})\nonumber\\
&=&\frac{\alpha^{2}}{M+1}+\frac{(1-\alpha^{2})}{(M+1)^{2}} \label{eq:blah1}
\end{eqnarray}
Similarly,
$$P\left(c_{12}\ne c_{22} \mid c_{11}\ne c_{21},\left(\gamma_{1}, \gamma_{2}\right)\right)=\left\{
\begin{array}{cl}
1 & \mbox{if $\left(\gamma_{1}, \gamma_{2}\right)=(1,1)$} \\
\frac{M}{M+1} & \mbox{otherwise,} \\ \end{array}\right.$$
and proceeding as before, we easily get
\begin{eqnarray}
P\left(c_{12}\ne c_{22}, c_{11}\ne c_{21}\right)&=&\sum_{\gamma_1,\gamma_2}
P\left(c_{12}\ne c_{22} \mid c_{11}\ne c_{21},\left(\gamma_{1}, \gamma_{2}\right)\right)
P\left(c_{11}\ne c_{21}\right)P(\gamma_{1})P(\gamma_{2})\nonumber\\
&=&\frac{M \alpha^{2}}{M+1}+\left(\frac{M}{M+1}\right)^{2}\left(1-\alpha^{2}\right)  \label{eq:blah2}
\end{eqnarray}
The result now easily follows by summing \eqref{eq:blah1} and \eqref{eq:blah2}.

\item[(b)] As before, denote by $\gamma_i=1$ if unit $i\in[m]$ is {\em not} removed from the partition.
By conditioning on $\gamma_1,\gamma_2$ and $c_{11},c_{21}$ we get
$$P\left(c_{12}=c_{22} \mid c_{11}=c_{21},\left(\gamma_{1}, \gamma_{2}\right)\right)=\left\{
\begin{array}{cl}
\frac{6+M}{(M+2)(M+3)} & \mbox{if $\left(\gamma_{1}, \gamma_{2}\right)=(1,1)$} \\
\frac{1}{M+1} & \mbox{otherwise,} \\ \end{array}\right.$$
and proceeding as earlier,
\begin{equation}\label{eq:blah3}
P\left(c_{12}=c_{22}, c_{11}=c_{21}\right)=\frac{(6+M) \alpha^{2}}{(M+1)(M+2)(M+3)}+
\frac{\left(1-\alpha^{2}\right)}{(M+1)^{2}}
\end{equation}
Also,
$$P\left(c_{12}\ne c_{22} \mid c_{11}\ne c_{21},\left(\gamma_{1}, \gamma_{2}\right)\right)=\left\{
\begin{array}{cl}
\frac{M(M+4)+2}{(M+2)(M+3)} & \mbox{if $\left(\gamma_{1}, \gamma_{2}\right)=(1,1)$} \\
\frac{M}{M+1} & \mbox{otherwise,} \\ \end{array}\right.$$
\end{itemize}
from which
\begin{equation}\label{eq:blah4}
P\left(c_{12}\ne c_{22}, c_{11}\ne c_{21}\right)=\left(\frac{M(M+4)+2}{(M+2)(M+3)}\right) \left(\frac{M}{M+1}\right) \alpha^{2}+
\left(\frac{M}{M+1}\right)^{2}\left(1-\alpha^{2}\right).
\end{equation}
The result now easily follows by summing \eqref{eq:blah3} and \eqref{eq:blah4}.
\end{proof}

\subsection{Proof of Proposition 3}
\begin{proof}
Let $P_{C_t} = \{\rho_t \in P : \rho^{\mathfrak{R}_{t}}_{t-1} = \rho^{\mathfrak{R}^{t}}_{t}\}$ denote the
collection of all partitions of the elements of $[m]$ at time $t$ that are
compatible with $\rho_{t-1}$ based on $\bm{\gamma}_t$.  Then by
construction, $\text{Pr}(\rho_t | \bm{\gamma}_t, \rho_{t-1})$ is a random
partition distribution whose support is $P_{C_t}$ so that 
\begin{align*}
\text{Pr}(\rho_t = \lambda | \bm{\gamma}_t, \rho_{t-1}) = \displaystyle \frac{\text{Pr}(\rho_t = \lambda) I[\lambda \in P_{C_t}]}{\sum_{\lambda^\prime} \text{Pr}(\rho_t = \lambda^\prime)I[\lambda^\prime \in P_{C_t}]}.
\end{align*} \qf
It only remains to show that  $\sum_{\lambda \in P_{C_t}} \text{Pr}(\rho_t
= \lambda) = \text{Pr}(\rho_t^{\mathfrak{R}_{t}})$ which is more easily seen
employing cluster label notation.   Let $c_{\gamma_t} = \{ c_{it} :
\gamma_{it}= 0\}$.   By iteratively invoking the sample size consistency
property we have that
\begin{align*}
\text{Pr}(\rho_t^{\mathfrak{R}_{t}}) & =  \sum_{c_{\gamma_t}} \text{Pr}(\rho_t = \{c_{1t}, \ldots, c_{mt}\}) \\
& = \sum_{\lambda \in P_{C_t}} \text{Pr}(\rho_t = \lambda ),
\end{align*}
where the last equality holds since  summing over $c_{\gamma_t}$ is based
only  on cluster labels that are not fixed from time point $t-1$ to $t$
which results in summing over all possible compatible partitions (i.e.,
$\lambda \in P_{C_t}$).
\end{proof}

\section{Details Associated With the MCMC Algorithm}\label{MCMC}
Here we provide much more detail associated with the MCMC scheme. We place
emphasis on the updating steps for $\bm{\gamma}_{it}$ and $\rho_t$ as
the other steps are straightforward once these  parameters have been
updated. That said,  pseudocode of the entire algorithm is provided in
Algorithm \ref{MCMC.algorithm}.    A key component of the MCMC algorithm
is to check the compatibility between $\rho_{t-1}$ and $\rho_t$, and
between $\rho_{t}$ and $\rho_{t+1}$ when updating $\bm{\gamma}_t$ and
$\rho_t$.  This is equivalent to ensuring that
$\rho^{\mathfrak{R}_t}_{t-1} = \rho^{\mathfrak{R}_t}_t$ and
$\rho^{\mathfrak{R}_{t+1}}_{t} = \rho^{\mathfrak{R}_{t+1}}_{t+1}$.   We
describe the process of updating each of the $c_{it}$ and $\gamma_{it}$
sequentially in time so that the entire vector $\bm{\gamma}_t$ is updated first and then $\bm{c}_t$.

\subsection{Updating $\gamma_{it}$} \label{gamma.update}
First note that $\bm{\gamma}_t$ only connects $\rho_{t-1}$ to $\rho_t$ so
that when updating $\gamma_{it}$ only compatibility between $\rho_{t-1}$
and $\rho_t$ needs to be checked (i.e., $\rho_t$ remains compatible with
$\rho_{t+1}$ due to $\gamma_{t+1}$ by construction).   We detail updating
$\gamma_{it}$ in an MCMC algorithm based on its full conditional found
in \eqref{full.conditional.gamma1b} of the main paper and which we provide
here for sake of clarity
\begin{align}\label{gamma.full.conditional}
\text{Pr}(\gamma_{it} = 1 \mid  -) = \displaystyle\frac{\alpha_t}{\alpha_t  + (1-\alpha_t)\text{Pr}(\rho^{\mathfrak{R}^{(+i)}_{t} }_t)/\text{Pr}(\rho^{\mathfrak{R}^{(-i)}_{t} }_t)}\text{I}[\rho^{\mathfrak{R}^{(+i)}_{t}}_{t-1} = \rho^{\mathfrak{R}^{(+i)}_{t}}_{t}].
\end{align}
The appeal of this form of the full conditional compared to that found in
\eqref{full.conditional.gamma1} of the main paper is that the EPPF used to
compute $\text{Pr}(\rho^{\mathfrak{R}^{(+i)}_{t} }_t)$ and
$\text{Pr}(\rho^{\mathfrak{R}^{(-i)}_{t} }_t)$ need not have a tractable
normalizing constant.  That said,  an exchangeable sequence of cluster
labels ($c_{1t}, \ldots c_{mt}$) is necessary.    Now let
$\gamma_{it}^{(d)}$ and $\rho^{(d)}_t$ be the value of $\gamma_{it}$ and
$\rho_t$ at the $d$th MCMC iterate.  Note that there are four scenarios to
consider when moving from   $\gamma_{it}^{(d-1)}$ to $\gamma_{it}^{(d)}$.
They are
\begin{enumerate}
\item  $\gamma_{it}^{(d-1)} = 1 \rightarrow \gamma_{it}^{(d)} = 0$ (For this move $\rho^{(d)\mathfrak{R}^{(-i)}_{t} }_{t-1}  = \rho^{(d-1)\mathfrak{R}^{(-i)}_{t} }_t$ continues to hold),
\item  $\gamma_{it}^{(d-1)} = 1 \rightarrow \gamma_{it}^{(d)} = 1$ (For this move $\rho^{(d)\mathfrak{R}^{(+i)}_{t} }_{t-1}  = \rho^{(d-1)\mathfrak{R}^{(+i)}_{t} }_t$ continues to hold),
\item $\gamma_{it}^{(d-1)} = 0 \rightarrow \gamma_{it}^{(d)} = 0$  (For this move $\rho^{(d)\mathfrak{R}^{(-i)}_{t} }_{t-1}  = \rho^{(d-1)\mathfrak{R}^{(-i)}_{t} }_t$ continues to hold), and
\item $\gamma_{it}^{(d-1)} = 0 \rightarrow \gamma_{it}^{(d)} = 1$  (For this move $\rho^{(d)\mathfrak{R}^{(+i)}_{t} }_{t-1}  = \rho^{(d-1)\mathfrak{R}^{(+i)}_{t} }_t$ needs to be verified).
\end{enumerate}
Thus compatibility needs to be checked only for (4).







As expected, calculating the ratio $\text{Pr}(\rho^{\mathfrak{R}^{(+i)}_{t} }_{t})/\text{Pr}(\rho^{\mathfrak{R}^{(-i)}_{t} }_{t})$ in \eqref{gamma.full.conditional} is the most challenging part of computing $\text{Pr}(\gamma_{it} = 1 \mid  -)$.  However, it can be straightforwardly calculated using exchangeability and ideas from \cite{neal:2000}.  Under the assumption of exchangeability, which permits allocating the $i$th unit by treating it as if it were the last unit,  we have that
\begin{align} \label{probability.ratio}
\frac{\text{Pr}(\rho^{\mathfrak{R}^{(+i)}_{t} }_{t})}{\text{Pr}(\rho^{\mathfrak{R}^{(-i)}_{t} }_{t})} = \frac{\text{Pr}(c_{it} | \rho^{\mathfrak{R}^{(-i)}_{t} }_{t}) \text{Pr}(\rho^{\mathfrak{R}^{(-i)}_{t} }_{t})}{\text{Pr}(\rho^{\mathfrak{R}^{(-i)}_{t} }_{t})} = \text{Pr}(c_{it} | \rho^{\mathfrak{R}^{(-i)}_{t} }_{t}).
\end{align}
Note that the probability in \eqref{probability.ratio} is a standard
calculation, used in Neal's Algorithm 8, for each $c_{it}$. When the EPPF
does not have a tractable normalizing constant, one may compute the
unnormalized probability of allocation to each of the $k_t$ existing
clusters and to a new singleton cluster and then normalize to obtain
(S.1).  Of course, here we know the value of $c_{it}$ from
$\rho_t^{(d-1)}$ and, in constraints to Neal's Algorithm 8, this
computation is done for the sake of computing the full conditional
distribution of $\gamma_{it}$.  Once \eqref{probability.ratio} is
calculated,  computing \eqref{gamma.full.conditional} and updating
$\gamma_{it}$ is straightforward.


\subsection{Updating $\rho_{t}$ Using Cluster Labels} \label{rho.update}

First note that only those $c_{it}$ that correspond to $\gamma_{it} = 0$
are updated. As a result,  the compatibility between $\rho_{t-1}$ and
$\rho_{t}$ is preserved and so only the compatibility between $\rho_t$ and
$\rho_{t+1}$ needs to be checked when updating any of the $c_{it}$.
Recall that the full conditional of $c_{it}$ corresponding to $\gamma_{it}
= 0$ is
\begin{align}\label{full.conditional.rho}
\text{Pr}(c_{it} = h \mid  -) \propto
\left\{
\begin{array}{cl}
N(Y_{it} \mid  \mu^{\star}_{c_{it} = h, t}, \sigma^{2\star}_{c_{it} = h, t})\text{Pr}(c_{it} = h)\text{I}[\rho^{h\mathfrak{R}_{t+1}}_{t} = \rho^{\mathfrak{R}_{t+1}}_{t+1}]  & \mbox{for $h = 1, \ldots, k_t^{-i}$, }   \\
N(Y_{it} \mid  \mu^{\star}_{new_h, t}, \sigma^{2\star}_{new_h, t})\text{Pr}(c_{it} = h)\text{I}[\rho^{h\mathfrak{R}_{t+1}}_{t} = \rho^{\mathfrak{R}_{t+1}}_{t+1}] &    \mbox{for $h = k_t^{-i}+1$,}
\end{array}
\right.
\end{align}
where $\text{Pr}(c_{it} = h) = \text{Pr}(c_{1t}, \ldots, c_{it} = h,
\ldots, c_{mt})$, and $k_t^{-i}$  is the number of clusters at time
$t$ when the $i$th unit has been removed.  The partition constructed from $\{c_{1t}, \ldots,
c_{it} = h, \ldots, c_{mt}\}$ is denoted as $\rho^h_t = \{ S_{1t}^{-i},
\ldots, S_{ht}^{-i}\cup \{i\}, \ldots, S_{k_t^{-i} t}^{-i}\}$ where
$S^{-i}_{jt}$ denotes the $j$th cluster at time $t$ with the $i$th unit
removed.  Note that it is possible that  $S^{-i}_{jt} = S_{jt}$. Further, abusing notation,  for $h = k_t^{-i}+1$ we have $\rho^h_t = \{ S_{1t}^{-i},
\ldots, S_{ht}^{-i}, \ldots, S_{k_t^{-i} t}^{-i}, \{i\}\}$.  Additionally,
$\mu^{\star}_{new_h, t}$ and $\sigma^{2\star}_{new_h, t}$ are auxiliary
parameters drawn from the prior as in \cite{neal:2000}'s Algorithm 8 (with
one auxiliary parameter). Then based on a spatial product
partition model for $\rho_t$, for $h = 1, \ldots, k_t^{-i}$ the $\text{Pr}(c_{it} = h)$ becomes
\begin{align}\label{partition.probability}
 \text{Pr}(c_{it} = h) = Pr(\rho_t^h)  \propto M\Gamma(|S^{-i}_{ht} \cup \{i\}|) g(\bm{s}^{-i\star}_{ht} \cup \bm{s}_i  | \nu_0) \prod_{j\ne h}^{k_t^{-i}} M\Gamma(|S^{-i}_{jt} |) g(\bm{s}^{-i\star}_{jt}  | \nu_0),
\end{align}
while for $h=k_t^{-i} + 1$
\begin{align}\label{partition.probability2}
 \text{Pr}(c_{it} = h) = Pr(\rho_t^h)  \propto M\Gamma(|\{i\}|) g(\bm{s}_i  | \nu_0) \prod_{j = h}^{k_t^{-i}} M\Gamma(|S^{-i}_{jt} |) g(\bm{s}^{-i\star}_{jt}  | \nu_0),
\end{align}
where $g(\cdot)$ is the auxiliary similarity function detailed in
\cite{page&quintana:2016} and $\bm{s}^{-i\star}_{jt} =\{\bm{s}_{i'} : i'
\in S^{-i}_{jt}\}$ are the spatial coordinates from units that belong to
the $j$th cluster at time $t$.  Updating $c_{it}$ can be carried out by
evaluating \eqref{full.conditional.rho} based on
\eqref{partition.probability} or \eqref{partition.probability2} for each $h = 1, \ldots, k_t^{-i} + 1$ and
then normalizing.

Once each of the $\bm{c}_t$ and $\bm{\gamma}_{t}$ are updated the MCMC
algorithm is completed by cycling through remaining model and latent
parameters found in model \eqref{FullModel} and updating them on an
individual basis using well known approaches.  In order to visualize all the moving parts of the MCMC algorithm we provide some pseudocode in Algorithm \ref{MCMC.algorithm}.  For sake of simplicity,  Algorithm \ref{MCMC.algorithm} describes an MCMC procedure that can be employed to sample from the joint posterior distribution based on model \eqref{dat.gen.mech}.

\begin{algorithm}
  \footnotesize
  \caption{: \footnotesize Pseudocode for the MCMC algorithm  for model \eqref{dat.gen.mech} of main article.  Let
  $T$ be the number of time points,  $m$ the number of units at each time point, and
  $D$ the number of MCMC iterations.}
  \label{MCMC.algorithm}
  \begin{algorithmic}[1]
    \For{$d = 1,\ldots,D$}   
        \For{$t = 1, \ldots, T$}\Comment{For each $t$, update the entire $\bm{\gamma}_{t}$ vector first and then $\bm{c}_t$}
            \For{$ i = 1, \ldots, m $}
              \State Set $\gamma^{(d)}_{i1} = 0$.
              \If{$t > 1$}  
              \State - Update $\gamma_{it}$ based on procedure described in Section \ref{gamma.update}.  That is, 
                \If{$\gamma^{(d-1)}_{it} = 1$} 
                 \State  Move to $\gamma^{(d)}_{it}$ using Bernoulli probability in \eqref{gamma.full.conditional}.  Compatibility holds by construction. 
                \EndIf
                \If{$\gamma^{(d-1)}_{it} = 0$} 
                 \State  Move to $\gamma^{(d)}_{it}$ using Bernoulli probability in \eqref{gamma.full.conditional}.  If $\gamma^{(d)}_{it}=0$, then compatibility 
                 \State \ \ \ holds by construction. If $\gamma^{(d)}_{it}=1$, the compatibility needs to be checked.  If 
                 \State \ \ \ $\rho^{(d)\mathfrak{R}^{(+i)}_{t} }_{t-1}  \ne \rho^{(d-1)\mathfrak{R}^{(+i)}_{t} }_t$, then set $\gamma^{(d)}_{it}=0.$
                \EndIf
              \EndIf
            \EndFor    
            \For{$ i = 1, \ldots, m $}
                \State - Update $c_{it}$ based on procedure described in Section \ref{rho.update}.
                \For{$ h = 1, \ldots, k^{-i}_t $}
                \State Compute the unnormalized multinomial probability in \eqref{full.conditional.rho} based on \eqref{partition.probability}. 
                \If{$\rho^{h\mathfrak{R}_{t+1}}_{t} \ne \rho^{\mathfrak{R}_{t+1}}_{t+1}$}
                \State Set unnormalized multinomial probability to zero.
                \EndIf
                \EndFor
                \For{$h = k^{-i}_t +1$}
                \State Compute the unnormalized multinomial probability in \eqref{full.conditional.rho} based on \eqref{partition.probability2}. 
                \EndFor
               \State Sample $c_{it}$ using the normalized   $k^{-i}_t +1$ multinomial probabilities.
            \EndFor    
            \For{$ j = 1, \ldots, K^{(d)} $} \Comment{$K^{(d)}$ = number of clusters at the $d$th iteration.}
                \State - Update $\mu^{\star}_{jt}$ based on Gaussian full conditional derived using well known arguments.
                \State - Update $\sigma^{2\star}_{jt}$ using a random walk Metropolis step.
            \EndFor 
          \State - Update $\theta_{t}$ based on Gaussian full conditional derived using well known arguments.
          \State - Update $\alpha_{t}$ based on beta full conditional  derived using well known arguments .
      \EndFor
      \State - Update $\tau^2$ using a random walk Metropolis step.
      \State - Update $\phi_0$ based on Gaussian full conditional derived using well known arguments.
      \State - Update $\lambda^2$ using a random walk Metropolis step.
    \EndFor                     
  \end{algorithmic}
\end{algorithm}

\section{Simulation Studies}\label{simulation}

In this section we provide more details associated with Simulation 1,  the competitors included in Simulation 3, and provide results from additional simulations similar to that described in the Section 3.3 of the main document.  We then provide details associated with a simulation study that includes spatial information.

\subsection{Simulation 1: Continued}

Table \ref{sampleARItrue} contains the adjusted Rand index values between the estimated partitions and that which was used to generate data.  Interestingly, as $\alpha$ increases, the ARI values also tend to increase.  

\spacingset{1}
\begin{table}[t]
\caption{Adjusted Rand index when comparing $\hat{\rho}_t$ to
the true $\rho_t$ for $t = 1, \ldots, 5$.   Note that $ARI(\cdot,
\cdot)$ denotes the adjusted Rand index as a function of two partitions.  These values are averaged over the 100
generated data sets. The values in parenthesis are Monte Carlo standard
errors. }
\vspace{0.25 cm} \centering
\begin{tabular}{l c c c c c}
  \toprule
 & $ARI(\hat{\rho}_1, \rho_1)$ &  $ARI(\hat{\rho}_2, \rho_2)$  &  $ARI(\hat{\rho}_3, \rho_3)$&  $ARI(\hat{\rho}_4, \rho_4)$ &  $ARI(\hat{\rho}_5, \rho_5)$ \\
  \midrule
$\alpha=0.0$ 		& 0.58 (0.03) & 0.63 (0.03) & 0.58 (0.03) & 0.54 (0.03) & 0.56 (0.03) \\
$\alpha=0.1$ 		& 0.63 (0.03) & 0.56 (0.03) & 0.55 (0.03) & 0.62 (0.03) & 0.57 (0.03) \\
$\alpha=0.25$ 		& 0.52 (0.03) & 0.57 (0.03) & 0.55 (0.03) & 0.63 (0.03) & 0.62 (0.03) \\
$\alpha=0.5$ 		& 0.60 (0.03) & 0.70 (0.03) & 0.69 (0.03) & 0.66 (0.02) & 0.59 (0.03) \\
$\alpha=0.75$ 		& 0.78 (0.02) & 0.77 (0.02) & 0.82 (0.02) & 0.80 (0.02) & 0.75 (0.02) \\
$\alpha=0.9$ 		& 0.83 (0.02) & 0.86 (0.02) & 0.87 (0.02) & 0.84 (0.02) & 0.76 (0.02) \\
$\alpha=0.9999$ 	& 0.92 (0.01) & 0.92 (0.01) & 0.92 (0.01) & 0.92 (0.01) & 0.92 (0.01) \\
   \bottomrule
\end{tabular}
\label{sampleARItrue}
\end{table}%
\spacingset{1.5}

\subsection{Simulation 3: Continued} \label{simstudy3_cont}

As referenced in the main article, Figure \ref{synthetic_data1} provides an example of the type of data that is generated in the simulation of Section \ref{sim.study3} in the main article.  To each of the 100 data sets generated, we fit our method and four other procedures.  We know provide specific details of the competing methods.
\begin{figure}[htbp]
\begin{center}
\includegraphics[scale=0.75]{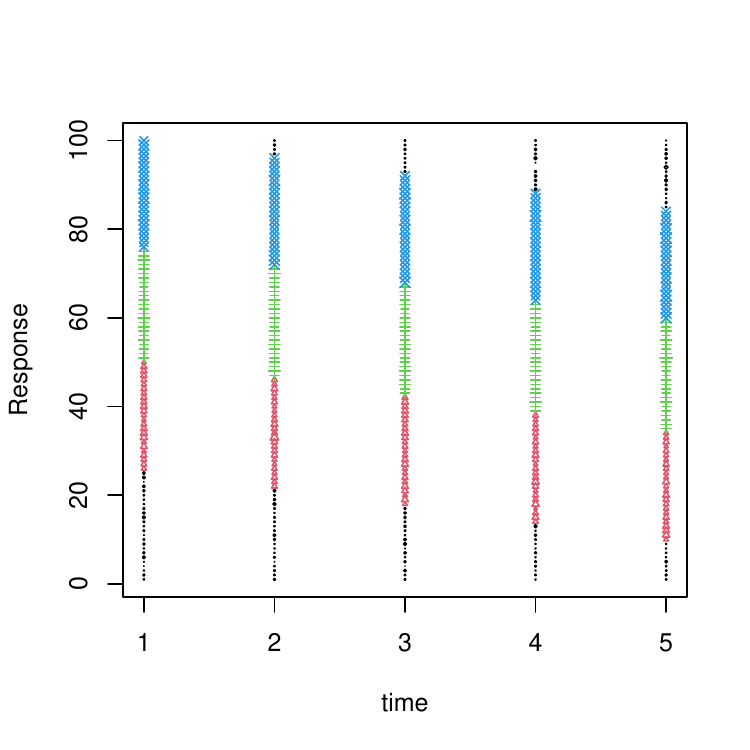}
\caption{One realization of a synthetic data set from simulation study in
Section \ref{sim.study3}.  Each color corresponds to a cluster and the
size of plotted symbol is proportional to the value of point being
plotted.} \label{synthetic_data1}
\end{center}
\end{figure}

\begin{enumerate}
\item weighted DDP (wddp):  A complete description of this model can be found in Section 4 of  \cite{quintana2020dependent}
   (and \citealt{BNPbook:2015}).  We only provide pertinent details here.  Let $\bm{z}_i = (Y_i, t_i)$ be the response and time pair for $i=1, \ldots, mT$.  The wddp models $\bm{z}_i$ with a Dirichlet process mixture model (DPM) and then derives the  conditional model $(Y_i | t_i)$.  In hierarchical form (including cluster labels) the model is 
\begin{align*}
\bm{z}_i | \bm{\mu}^{\star}, \bm{\Sigma}^{\star}, c_i & \stackrel{ind}{\sim} N_2(\bm{\mu}^{\star}_{c_i}, \bm{\Sigma}^{\star}_{c_i}), \ \ i = 1, \ldots, mT \\
\bm{\mu}^{\star}_j & \stackrel{iid}{\sim} N_2(\bm{\mu}_0, \bm{\Sigma}_0), \ \ j = 1, \ldots, K\\
\bm{\Sigma}^{\star}_j & \stackrel{iid}{\sim} \text{Inverse-Wishart}_2(\nu, \psi\bm{I}), \ \ j = 1, \ldots, K\\
\bm{\mu}_0 & \stackrel{iid}{\sim} N_2(\bm{m}, s^2\bm{I}),\\
\bm{\Sigma}_0 & \stackrel{iid}{\sim} \text{Inverse-Wishart}_2(\nu_0, \psi_0\bm{I}),\\
\text{Pr}(c_i = j) & = \pi_j  \ \ \mbox{where $\pi_j = V_j \prod_{\ell < j} (1-V_{\ell})$} ,\\
V_{\ell} & \stackrel{iid}{\sim} \text{Beta}(1, M).
\end{align*}
We set $\bm{m} = 0\bm{j}$,  $s^2 = 25$, $\nu = \nu_0 = 4$,  $\psi = \psi_0 = 1$, $K=30$, and $M = 1$.  
The model  induces a weight-dependent mixture model of regressions
\begin{align*}
f(y_i | t_i) = \sum_{j=1}^{K} w_j(t_i) N(y_i | \beta^{\star}_{0j} + \beta^{\star}_{1j} t_i, \sigma^{2\star}_j),
\end{align*}
where
\begin{align*}
w_j(t_i) =  \frac{\pi_j N(t_i | \mu^{\star}_{2j}, {\Sigma}_{22j}^{\star})}{\sum_{\ell=1}^{K} \pi_{\ell} N(t_i | \mu^{\star}_{2\ell},  {\Sigma}_{22\ell}^{\star})}, \ \ j = 1, \ldots, K, 
\end{align*}
and $\beta^{\star}_{0j}  = \mu^{\star}_{1j} - \frac{\Sigma^{\star}_{12j}}{\Sigma^{\star}_{22j}}\mu^{\star}_{2j}$, $\beta_{1h} =  \frac{\Sigma^{\star}_{12j}}{\Sigma^{\star}_{22j}}$, and $\sigma^{2\star}_{j} = \Sigma^{\star}_{11j} -  \frac{\Sigma^{\star}_{12j}\Sigma^{\star}_{21j}}{\Sigma^{\star}_{22j}}$.   
Note that time is include in the weights of the the conditional model which is employed to calculate LPML and WAIC.  A blocked Gibbs sampler was employed to sample from the posterior where $V_{K} =1$ to ensure that $\sum_{j=1}^K \pi_j = 1$.

\item linear DDP (lddp): A complete description of this model is provided in chapter 4.4.2 of
    \cite{BNPbook:2015} (see also \citealt{quintana2020dependent}).  We only provide pertinent details here.  As with the wddp model, for the lddp we consider $(Y_i, t_i)$  for $i=1, \ldots, mT$.  Time is incorporated in the atoms of a Dirichlet process (DP) so that the $j$th atom is expressed as $\sum_{\ell}^qB_{\ell}(t, \bm{\xi})\beta_{j\ell}$ where  $B_{\ell}(t, \bm{\xi})$ denotes the $\ell$-th B-spline basis function evaluated at $t$ for knots $\bm{\xi}$.  Letting $\bm{\beta} = (\beta_1, \ldots, \beta_q)$ and $\bm{B}(t_i, \bm{\xi})$ the $q$-dimensional B-spline basis vector for unit $i$ and after introducing cluster labels, the lddp model can be expressed hierarchically as
\begin{align*}
Y_i | \bm{\beta}^{\star}, {\sigma}^{2\star}, c_i & \stackrel{ind}{\sim} N(\bm{B}'(t_i, \bm{\xi})\bm{\beta}^{\star}_{c_i}, {\sigma}^{2\star}_{c_i}), \ \ i = 1, \ldots, mT,\\
\bm{\beta}^{\star}_j & \stackrel{iid}{\sim} N_q(\bm{\beta}_0, \bm{\Sigma}_0), \ \ j = 1, \ldots, k,\\
\sigma^{2\star}_j & \sim \text{Inverse-Gamma}(a, b), \ \ j = 1, \ldots, k, \\
\bm{\beta}_0 & \stackrel{iid}{\sim} N_2(\bm{m}, s^2\bm{I}),\\
\bm{\Sigma}_0 & \stackrel{iid}{\sim} \text{Inverse-Wishart}_2(\nu_0, \psi_0\bm{I}),\\
\{c_1, \ldots, c_{mT}\} & \sim CRP(M).
\end{align*}
We set $\bm{m} = 0\bm{j}$,  $s^2 = 25$, $\nu_0 = q+2$,  $\psi_0 = 10$, $a=b=1$, and $M = 1$.  Neal's Algorithm 8 (\citealt{ neal:2000}) was employed to sample from the posterior distribution.

\item Griffiths-Milne dependent Dirichlet process (gmddp) mixture.  This is carried out using {\tt DDPdensity} in the {\tt BNPmix} package found in {\tt R}.  The function considers partially exchangeable data (Lijoi et al., 2014) such that exchangeability is assumed within each group and the vector of random probability measures at each time point are modeled jointly as a vector of GM-DDP.

\item A temporally independent $CRP(M)$ model (ind\_crp): This model is a special case of \cite{Caron:2007}'s and our model.  Specifically,  $\alpha$ is set to 0.  For this procedure,  the following model was fit separately for each time period.
\begin{align*}
Y_{i} \mid  \bm{\mu}^{\star}, \bm{\sigma}^{2\star}, c_{i} & \stackrel{ind}{\sim} N(\mu^{\star}_{c_{i}}, \sigma_{c_{i}}^{2\star}), \ i = 1, \ldots, m,   \\
(\mu_{j}^{\star}, \sigma^{\star}_{j})\mid \theta, \tau^2 & \stackrel{ind}{\sim} N(\theta, \tau^2) \times UN(0,A_{\sigma}), \ j = 1, \dots, k ,\\
(\theta, \tau) & \stackrel{iid}{\sim} N(m_0, s_0^2) \times UN(0, A_{\tau}).
\end{align*}
We set $m_0=0$, $s_0^2=10^2$, $A_{\sigma}=0.5sd(Yvec)$, and  $A_{\tau}=100$.  Neal's Algorithm 8 (\citealt{ neal:2000}) was used to sample from the posterior distribution.

\item A temporally static $CRP(M)$ model (static\_crp): This procedure is a special case of  \cite{Caron:2007} ($\alpha = 1$) and is fit to a concatenated version of the data  $Y_i$, for $i = 1, \ldots, mT$.   Specifically, the following model was fit
\begin{align*}
Y_{i} \mid  \bm{\mu}^{\star}, \bm{\sigma}^{2\star}, c_{i} & \stackrel{ind}{\sim} N(\mu^{\star}_{c_{i}}, \sigma_{c_{i}}^{2\star}), \ i = 1, \ldots, mT,   \\
(\mu_{j}^{\star}, \sigma^{\star}_{j})\mid \theta, \tau^2 & \stackrel{ind}{\sim} N(\theta, \tau^2) \times UN(0,A_{\sigma}), \ j = 1, \dots, k ,\\
(\theta, \tau) & \stackrel{iid}{\sim} N(m_0, s_0^2) \times UN(0, A_{\tau}).
\end{align*}
We set $m_0=0$, $s_0^2=10^2$, $A_{\sigma}=0.5sd(Yvec)$, and  $A_{\tau}=100$.  Neal's Algorithm 8 (\citealt{ neal:2000}) was used to sample from the posterior distribution.

\end{enumerate}

\subsubsection{Results from Additional Synthetic Data}

In addition to the synthetic data generated in Simulation 3 of the main document,  we also generated data as described in the following two scenarios.
\begin{enumerate}

\item[] {\bf Scenario 1}:  For the $i$th unit, we employ the following as a data generating mechanism
\begin{align*}
y_{it} =  \omega y_{it-1} + \epsilon_{it}, \ \mbox{for $i=1, \ldots, m$, and $t = 1, \ldots, T$},
\end{align*}
where  $|\omega| < 1$ and $\epsilon_{it} \sim N(0, v^2)$. For this
scenario measurements are are correlated across time, but independent
between the $m$ units.   Data were generated with $m=100$ and using
the following levels of factors of interest
\begin{itemize}
\item $\omega \in \{0, 0.1, 0.25, 0.5, 0.75, 0.9\}$
\item $v^2 \in \{0.5^2, 1^2\}$
\item $T \in \{5, 10\}$
\end{itemize}
Notice that for this scenario, there is no ``true'' partition and so we are interested only in comparing the model fit of our approach to that of the five competitors.


\item[] {\bf Scenario 2}:  For the $i$th unit, we employ the following as a data generating mechanism
\begin{align*}
y_{it} =  \omega_{c_i} y_{it-1} + \epsilon_{it}, \ \mbox{for $i=1, \ldots, m$, and $t = 1, \ldots, T$},
\end{align*}
where as before $c_{it} \in \{1,2,3,4\}$ with $\omega \in \{-0.75,
-0.25, 0.25, 0.75\}$ and $\epsilon_{it} \sim N(0, v^2)$.  As in the
previous scenarios measurements are  correlated across time, but
independent between the $m$ units.   Data were generated with $m=100$
and using the following levels of factors of interest
\begin{itemize}
\item $v^2 \in \{0.5^2, 1^2\}$
\item $T \in \{5, 10\}$
\end{itemize}
In this scenario, there is a ``true'' partition but our approach, nor the competitors, are
parametrized in such a way as to detect it.  Indeed, our method models
temporal dependence only through the partition (i.e, there is no
temporal correlation parameter in the likelihood).  That said, we still compare partition recovery by way of the adjusted Rand index (ARI) in addition to model fit. 

\end{enumerate}

In both scenarios the function {\tt arima.sim} in {\tt R} (\citealt{R:2020}) is used to generate
the 100 replicate data sets.  We collect 1,000 MCMC iterates after discarding the first 25,000 as
burn-in and thinning by 25 (i.e., 50,000 total MCMC draws were
collected).    The prior parameters that we used are $A_{\sigma} =
0.5sd(vec(Y))$, $A_{\tau} = 100$, $A_{\lambda} = 100$, $m_0 = 0$, $s2_0 =
100$, $a_{\alpha} = 1$, $b_{\alpha} = 1$.  WAIC is used to compare each of the procedures in terms of model fit
and ARI to compare ability of estimating the true
partition structure.  Results are found in Figures  \ref{ss_dt1_waic} - \ref{ss_dt4_ari}.  From Figure \ref{ss_dt1_waic} we see that our approach produces the smallest WAIC for all factors considered Scenarios 1's data generating schemes.  Thus, our approach tends to fit the data best.

\begin{figure}[htbp]
\begin{center}
\includegraphics[scale=0.75]{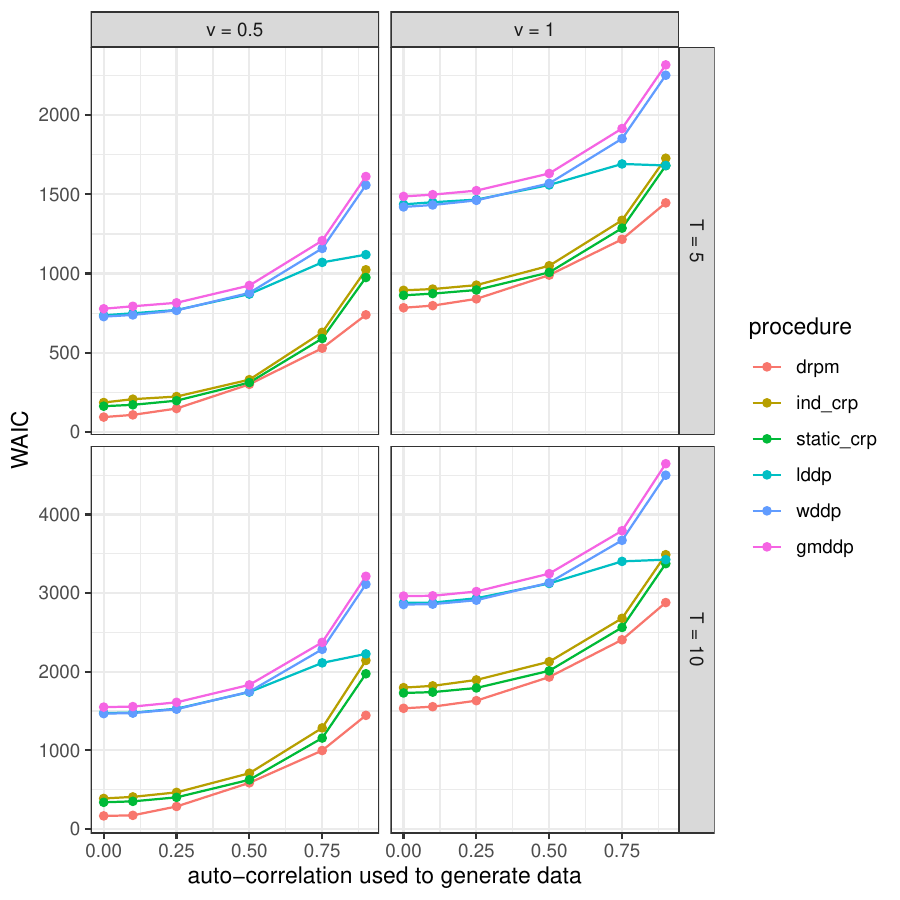}
\caption{Results for WAIC from the first data generating scenario.} \label{ss_dt1_waic}
\end{center}
\end{figure}

For Scenario 2,  static\_crp is quite competitive to our approach and produces similar WAIC values.  Apart from that, our approach does better than the other competitors.  From Figure \ref{ss_dt4_ari} our approach does much better at recovering the partition compared to other procedures.  That said, we the ARI values are still quite small (which was expected).

\begin{figure}[htbp]
\begin{center}
\includegraphics[scale=0.75]{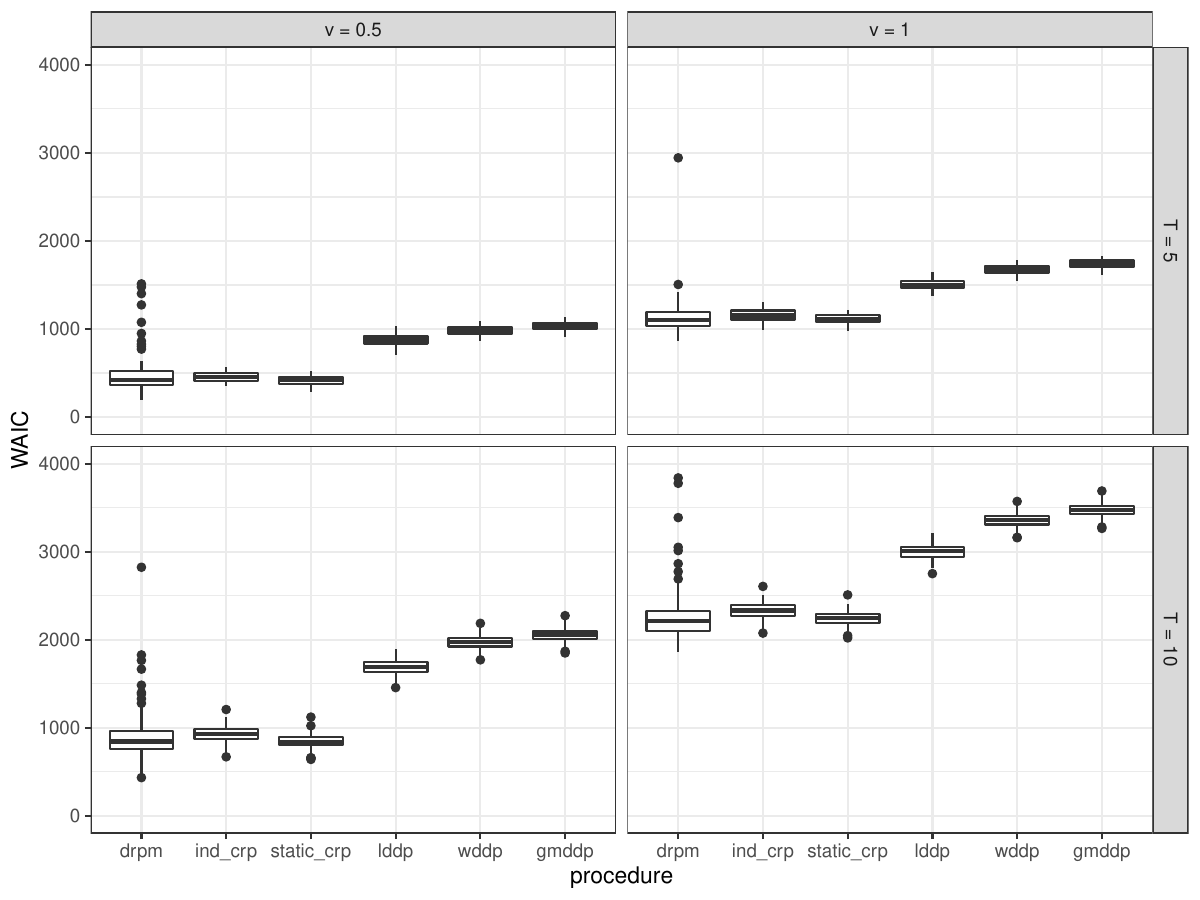}
\caption{Results for WAIC from the fourth data generating scenario.} \label{ss_dt4_waic}
\end{center}
\end{figure}

\begin{figure}[htbp]
\begin{center}
\includegraphics[scale=0.75]{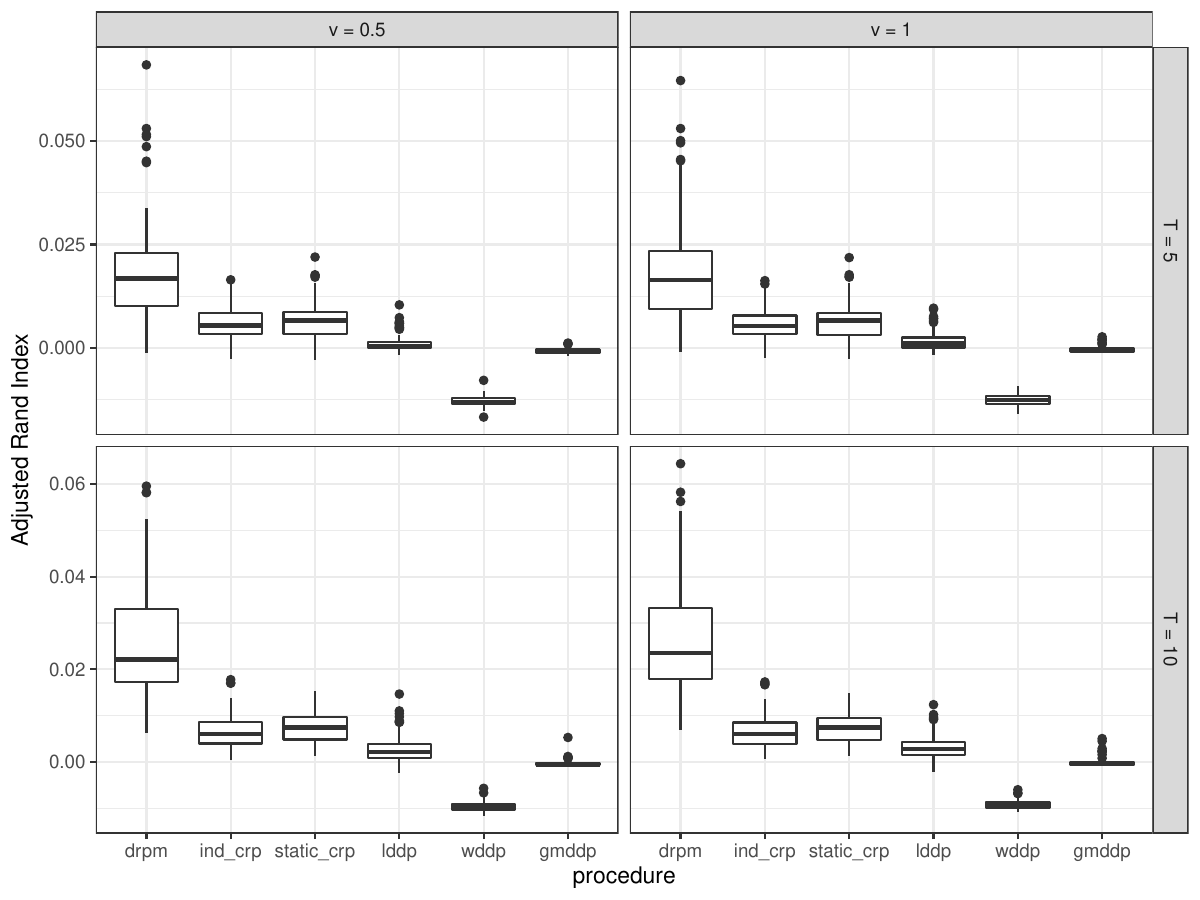}
\caption{Results for ARI from the fourth data generating scenario.} \label{ss_dt4_ari}
\end{center}
\end{figure}

\subsection{Simulation 4: Space-Time Data Generation} \label{simstudy4}
Here we discuss our final simulation study, where we investigated the
performance of our procedure when both space and time are considered. To
do so, we created synthetic data sets that contain spatio-temporal
structure.  Each  employs a $15 \times 15 $ regular grid with spatial
locations coming from the unit interval. In addition,  either 5 or 10 time
points were considered resulting in 1,125 or 2,250 total observations.
Response values were generated in two  ways.  The first employs a Gaussian
process with a separable spatio-temporal exponential covariance function.
We set the spatial scale to 0.3, the temporal scale to 2 and the sill to
1.75 (see \citealt{padoan&bevilacqua:2015} for more details).   Note that
no ``true'' partition exists for this data generating mechanism.  However,
we study it to explore performance of our method when spatial structure
exists among observations but was not induced through partitioning.   The
second method of generating response values essentially employs model
\eqref{dat.gen.mech} as a data generating mechanism.  Spatio-temporal
partitions were generated using \eqref{similarity.function} together with
conditional cluster label probabilities of  \citet[pg.
265]{PPMxMullerQuintanaRosner}  and setting $\alpha_t=\alpha$ for all $t$
with $\alpha \in \{0, 0.5, 0.9\}$ (note that for $\alpha = 0$ no temporal
dependence exists among partitions).   In the similarity function
\eqref{similarity.function} we considered $\nu_0 \in \{2,20\}$ where
$\nu_0 = 2$ corresponds to light weight on spatial proximity and $\nu_0 =
20$ moderate weight.   Finally, we set $\tau^2=1$ and
$\sigma^{2\star}_{c_{it}t} = \sigma^2 = 0.04$ for all $i$ and $t$
resulting in smaller with-in cluster variability relative to
between-cluster variability.

To determine the impact that each component of our spatio-temporal
partition model has on model fit, we fit the hierarchical model
\eqref{dat.gen.mech} to each synthetic data set using a variety of random
partition models which are listed below. As a competitor, we consider  a
linear dependent Dirichlet process \citep{MacEachern:2000,deIorio:2009},
indexing the random probability measure through the mean function of the
atoms by space and time.  To ensure sufficient flexibility, B-spline basis
functions for both spatial coordinates were employed.  The details of each
model considered are
\begin{enumerate}
\item[] Model 1: $(\rho_1, \ldots, \rho_T) \sim  stRPM(\bm{\alpha}, \nu_0, M)$
\item[] Model 2: $\rho_t \stackrel{iid}{\sim}  sPPM(\nu_0, M)$ for $t = 1, \ldots, T$.
\item[] Model 3: $(\rho_1, \ldots, \rho_T) \sim  tRPM(\bm{\alpha}, M)$
\item[] Model 4: $\rho_t \stackrel{iid}{\sim}  CRP(M)$ for $t = 1, \ldots, T$.
\item[] Model 5: linear dependent Dirichlet process  mixture model (DDPM).
\end{enumerate}
Additionally,  for each model that employs the sPPM, we considered both
$\nu_0 = 2$  (models 1a, 2a) and $\nu_0 = 20$ (models 1b, 2b). For each
data generating scenario, 100 data sets were created  and each of the
models listed was  fit by collecting 1,000 MCMC samples after discarding
the first 5,000 as burn-in and thinning by 5 after setting $A_{\sigma} =
1$ and $A_{\tau} = 2$.  Model fits were compared using WAIC. Results can
be found in Figures \ref{waicGP} and \ref{waicDRP}.

\spacingset{1}
\begin{figure}[htbp]
\begin{center}
\includegraphics[scale=0.6]{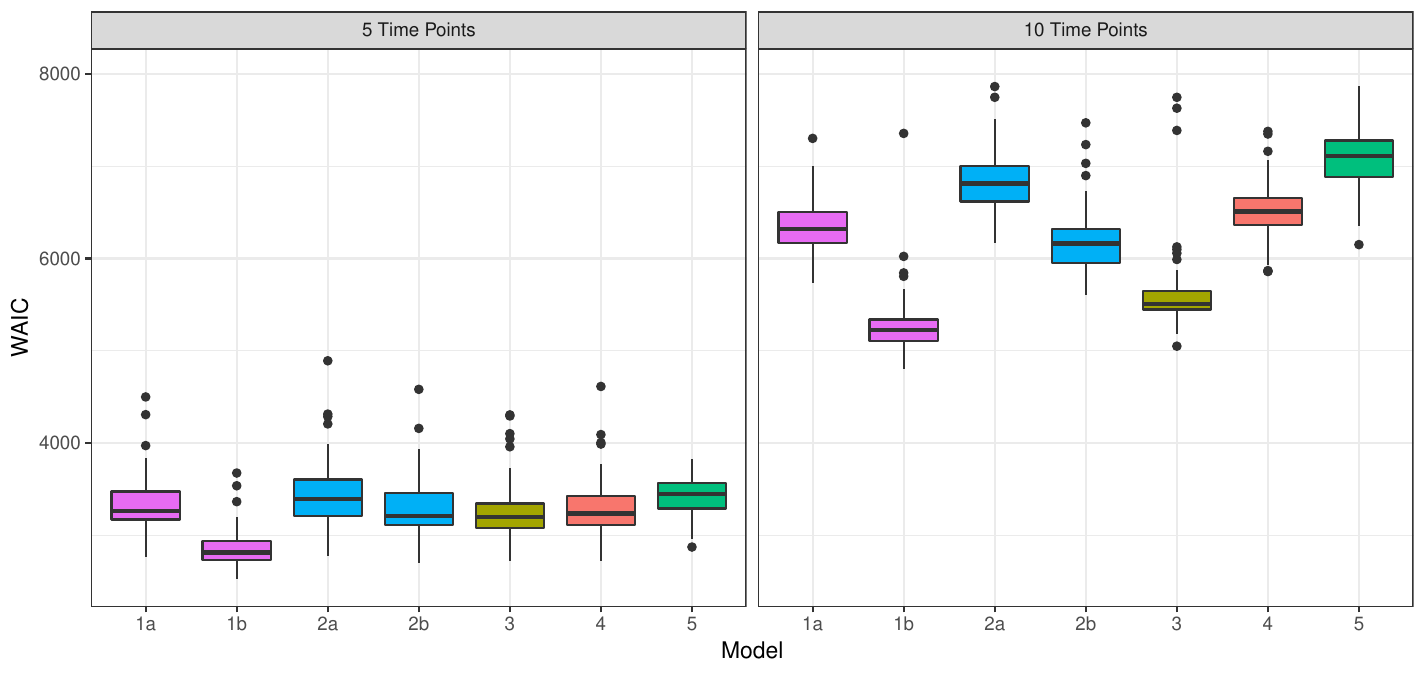}
\caption{Results from simulation study when observations were generated using a
spatio-temporal Gaussian process. Boxplots display the 100 WAIC values
that correspond to model fit for each synthetic data set.  Note that
smaller WAIC values indicate a better fit. } \label{waicGP}
\end{center}
\end{figure}
\spacingset{1.5}

\spacingset{1}
\begin{figure}[htbp]
\begin{center}
\includegraphics[scale=0.6]{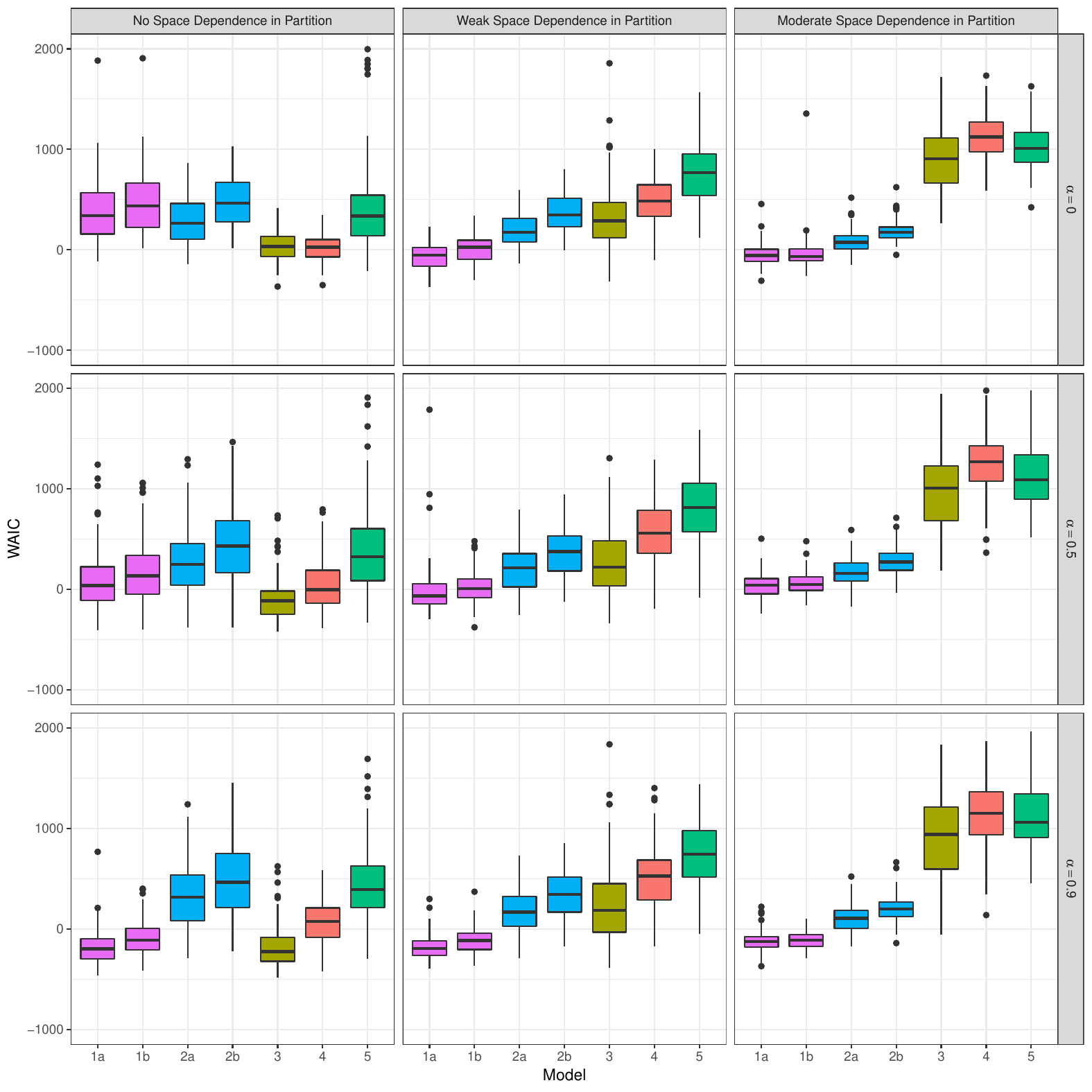}
\caption{Results from simulation study for the scenario in which partition
structure is included in data generation process.  Boxplots display the
100 WAIC values that correspond to model fit for each synthetic data
generating scenario.  Note that smaller indicates better fit.}
\label{waicDRP}
\end{center}
\end{figure}
\spacingset{1.5}

The primary purpose of Figure \ref{waicGP} is to compare model fit from
the spatio-temporal partition model we develop to that from the linear
DDPM  (model 5). It appears that all methods are competitive to the linear
DDPM, which is particularly true with 10 time points.  Thus,  our
dependent partition model accommodates temporal dependence more
efficiently relative to the linear DDPM under this data generating
scenario.  Note that regardless of the number of time points, model 1b
($stRPM(\alpha, \nu_0, M)$ with $\nu_0 = 20$) appears to perform best.
Surprisingly,  $tRPM(\alpha, M)$ (model 4) is quite competitive,
particularly with 10 time points.    The conclusion here is that employing
$stRPM(\alpha, \nu_0, M)$ to model partitions appears to accommodate
spatio-temporal dependence even if there is no underlying partition
structure.

From Figure \ref{waicDRP} we see that when partitions are generated
independently, there is very little lost by employing the dependent joint
model in terms of model fit (see top left panel for model 3 and 4).
However, as spatial and/or temporal structure is introduced in the
partition model, there are clear gains in terms of model fit when
employing $tRPM(\bm{\alpha}, M)$ and/or $stRPM(\bm{\alpha}, \nu_0,M)$.
From this simulation it seems that employing the $tRPM(\alpha, M)$
regardless of the strength of temporal dependence among partitions is
reasonable as there is minimal cost in terms of model fit even when
partitions are generated independently.  Finally, it appears that
$stRPM(\bm{\alpha}, \nu_0,M)$ performed best.

\section{Data Applications} \label{SIMCE}
In this section, we detail an additional application in the
field of education.

\subsection{SIMCE Data Application} \label{SIMCE.application}

Incorporating spatio-temporal structure in education studies has been
explored (e.g., \citealt{cuervo&anton:2013,
fotheringmah&chalrton&brundson:2001}). In school assessment and
effectiveness studies, temporal persistence in school performance is of
principal interest. It seems likely that school performance from
year-to-year is relatively stable except in circumstances where a school
undergoes  many changes in personnel (faculty and students) or curriculum
from one year to the next. In addition  it seems reasonable that
geographic location plays a role in school performance, particularly if
communities are segregated socio-economically which happens to be the case
in metropolitan area of Santiago, Chile. For these reasons we fit model
\eqref{FullModel} to these data as well.

In order to formally assess both national and school level education
effectiveness in Chile, the Chilean national learning outcome assessment
system (Sistema de Medici\'on de Calidad de la Educaci\'on, SIMCE) was
created to, among other things, administer standardized tests to education
institutions in Chile. Each year a standardized test in mathematics and
language is  administered to 4th grade students.
We were granted access to 7 years  of data (2005-2011) where the longitude
and latitude of most schools were recorded.

For the SIMCE data in addition to the 16 models fit to the PM$_{10}$ data,
we also considered an alternative to employing the sPPM at each time
period which is more computationally efficient. The alternative approach
models only $\rho_1$ with an $sPPM(\nu_0, M)$ (equation \eqref{sPPM} of
the main document) and the remaining $T-1$ partitions with an
$tRPM(\bm{\alpha}, M)$ (equation \eqref{joint.joint.model} of the main
document) with a $CRP(M)$ EPPF.  In this formulation, the strength of
$\alpha_t$  would be the only mechanism by which the spatial structure
found in $\rho_1$.

We employed the same prior parameter values here  as in Section
\ref{pm10.application} of the main document, except we set $A_{\sigma}=15$
and $\nu_0=2$ to account for the higher variability present in the SIMCE
data. Each of the 24 models were fit to the SIMCE data by collecting 1000
MCMC draws after discarding the first 5000 as burn-in and thinning by 5.
The LPML and WAIC results can be found in Table \ref{SIMCEfull}.

Similar to the PM$_{10}$ analysis, the best performing model in terms of
WAIC includes spatio-temporal dependence in the partition model, temporal
dependence among the atoms, and temporal dependence in the likelihood. The
best performing model in terms of LPML assumed atoms are $iid$.  Notice
further, that generally speaking, incorporating temporal dependence in the
model for $(\rho_1, \ldots, \rho_7)$ improves model fit.    It appears
that there is a cost in model fit associated with employing the
$sPPM(\nu_0, M)$ at the first time period and the $CRP(M)$ for subsequent
time periods in terms of model fit.  However, the cost is not exorbitant
relative to extraordinary computation gains (12 hours for model that
includes space at time 1 versus 6 days for the model that includes space
at each time point).   To see how estimated partitions from the two models
(that which includes space in the first time point versus that which
includes space at each time point) change over time, we provide Figure
\ref{time1vstime7}.  Notice that there is a change in dependence from time
period 2 and 3 and that the similarity between partitions decays when
including space at each time point. 

\begin{table}
\caption{Results of fitting 24 models to the SIMCE data.  The bold font
identifies the models that produced the best LPML and WAIC values. Higher
values for LPML indicate better fit while lower values for WAIC indicate
better fit.}
\begin{tabular}{c @{\hskip 0.05in} c @{\hskip 0.05in} c | cc cc cc }
& & & \multicolumn{6}{c}{Space} \\ \cmidrule(r){4-9}
& & & \multicolumn{2}{c}{No} & \multicolumn{4}{c}{Yes} \\ \cmidrule(r){4-5} \cmidrule(r){6-9}
\multicolumn{3}{c|}{Temporal Dependence In} & &  & \multicolumn{2}{c}{Each Time} & \multicolumn{2}{c}{First Time} \\\cmidrule(r){1-3}  \cmidrule(r){6-7} \cmidrule(r){8-9}
Partition & Likelihood & Atoms 	& LPML & WAIC & LPML & WAIC & LPML & WAIC \\ \midrule
No & No & No  					& -34094 & 62963 & -33543 & 62416 & -34054 & 62960 \\
No & No & Yes  				& -34040 & 62693 & -33558 & 62577 & -34044 & 63043 \\
No & Yes & No  				& -31214 & 60087 & -30701 & 60400 & -31129 & 59349 \\
No & Yes & Yes  				& -31241 & 59572 & -30712 & 60944 & -31115 & 59686 \\   \midrule
Yes & No & No  				& -32457 & 64835 & -30760 & 61045 & -31198 & 61516 \\
Yes & No & Yes  				& -31007 & 61948 & -31180 & 61348 & -32690 & 64578 \\
Yes & Yes & No  				& -30390 & 60340 & -$\bm{29936}$ & 58122 & -30573 & 60132 \\
Yes & Yes & Yes  				& -30378 & 60314 & -30959 & $\bm{57834}$ & -30331 & 59544 \\   \bottomrule
\end{tabular}
\label{SIMCEfull}
\end{table}%

\begin{figure}
\centering
\includegraphics[scale=0.5]{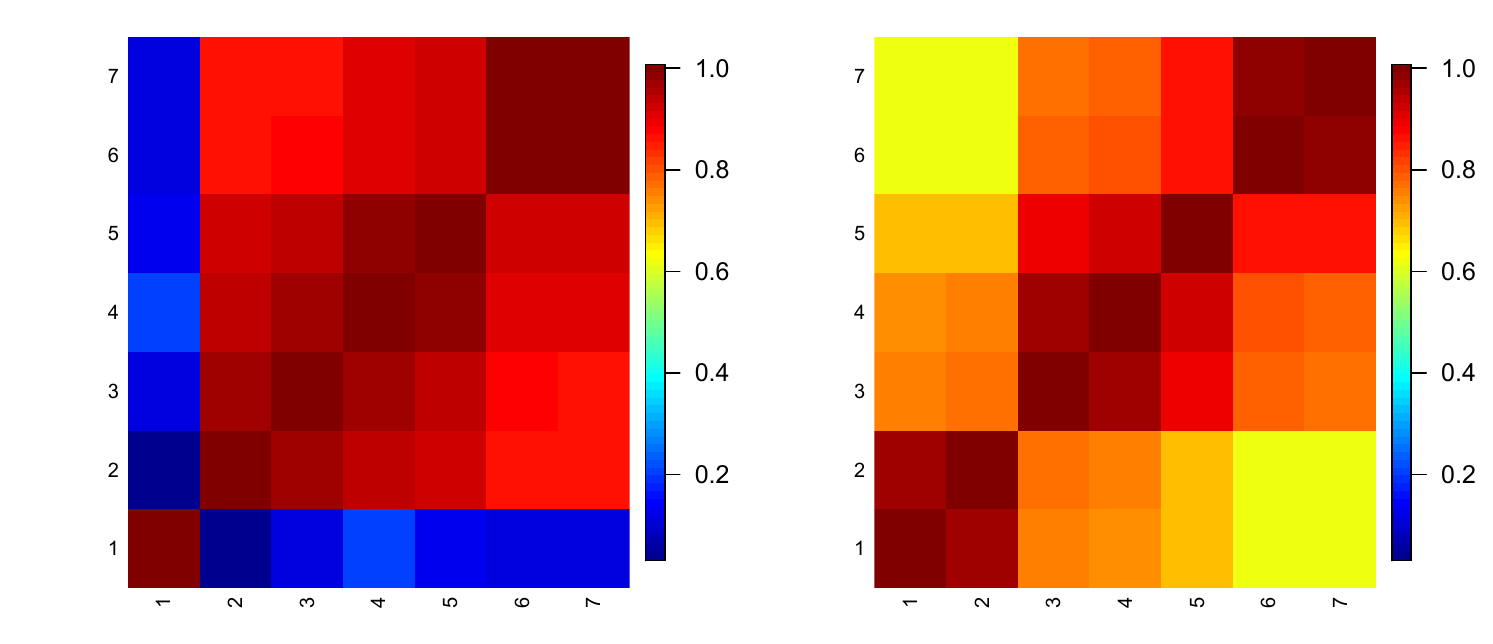}
\caption{Each figure is a summary of the lagged $ARI$ value corresponding to models
that include space in different ways.  The left plot corresponds to model
that includes space in partition model only at time period 1.  The right
plot corresponds to model that includes space in partition model at each
of the seven time periods.} \label{time1vstime7}
\end{figure}

\bibliographystyle{agsm}

\bibliography{reference}